\providecommand{\algorithmname}{Algorithm}
\let\oldforeign@language\foreign@language
\DeclareRobustCommand{\foreign@language}[1]{%
	\lowercase{\oldforeign@language{#1}}}
\let\oldforeign@language\foreign@language
\DeclareRobustCommand{\foreign@language}[1]{%
	\lowercase{\oldforeign@language{#1}}}
\newcommand{\MYfooter}{\smash{
		\hfil\parbox[t][\height][t]{\textwidth}{\centering
			\thepage}\hfil\hbox{}}}
\def\ps@IEEEtitlepagestyle{%
	\def\@oddhead{\parbox[t][\height][t]{\textwidth}{\centering \scriptsize
			Personal use of this material is permitted. Permission from the author(s) and/or copyright holder(s), must be obtained for all other uses. Please contact us and provide details if you believe this document breaches copyrights.\\
			\noindent\makebox[\linewidth]{}
		}\hfil\hbox{}}%
	\def\@evenhead{\scriptsize\thepage \hfil \leftmark\mbox{}}%
	\def\@oddfoot{\parbox[t][\height][l]{\textwidth}{
			\vspace{-20pt}{\rule{\textwidth}{0.4pt}}\\ \footnotesize\underline{To cite this article:}
			{\bf{\footnotesize\textcolor{red}{H. A. Hashim "A Geometric Nonlinear Stochastic Filter for Simultaneous Localization and Mapping," Aerospace Science and Technology, vol. PP, no. PP, pp. PP, 2021.}}} doi: \href{https://doi.org/10.1016/j.ast.2021.106569}{10.1016/j.ast.2021.106569}\\
			\noindent\makebox[\linewidth]
		}\hfil\hbox{}}%
	\def\@evenfoot{\MYfooter}}
\newtheorem{defn}{Definition}
\newtheorem{lem}{Lemma}
\newtheorem{thm}{Theorem}
\newtheorem{rem}{Remark}
\newtheorem{assum}{Assumption}
\begin{document}
	\bstctlcite{IEEEexample:BSTcontrol}

	\title{A Geometric Nonlinear Stochastic Filter for Simultaneous Localization and Mapping}

\author{Hashim A. Hashim$^*$\IEEEmembership{~Member, IEEE}
	\thanks{This work was supported in part by Thompson Rivers University Internal research fund \# 102315.}
	\thanks{$^*$Corresponding author, H. A. Hashim is with the Department of Engineering and Applied Science, Thompson Rivers University, Kamloops, British Columbia, Canada, V2C-0C8, e-mail: hhashim@tru.ca}
}



\maketitle

\begin{abstract}
Simultaneous Localization and Mapping (SLAM) is one of the key robotics
tasks as it tackles simultaneous mapping of the unknown environment
defined by multiple landmark positions and localization of the unknown
pose (\textit{i.e}., attitude and position) of the robot in three-dimensional
(3D) space. The true SLAM problem is modeled on the Lie group of $\mathbb{SLAM}_{n}\left(3\right)$,
and its true dynamics rely on angular and translational velocities.
This paper proposes a novel geometric nonlinear stochastic estimator
algorithm for SLAM on $\mathbb{SLAM}_{n}\left(3\right)$ that precisely
mimics the nonlinear motion dynamics of the true SLAM problem. Unlike
existing solutions, the proposed stochastic filter takes into account
unknown constant bias and noise attached to the velocity measurements.
The proposed nonlinear stochastic estimator on manifold is guaranteed
to produce good results provided with the measurements of angular
velocities, translational velocities, landmarks, and inertial measurement
unit (IMU). Simulation and experimental results reflect the ability
of the proposed filter to successfully estimate the six-degrees-of-freedom
(6 DoF) robot's pose and landmark positions. 
\end{abstract}

\begin{IEEEkeywords}
Simultaneous Localization and Mapping, nonlinear stochastic observer,
stochastic differential equations, pose estimator, position, attitude,
Brownian motion process, inertial measurement unit, SLAM.
\end{IEEEkeywords}

\IEEEpeerreviewmaketitle{}

\section{Introduction}

\IEEEPARstart{R}{obotics} applications are experiencing a surge in demand for navigation
solutions suitable for partially or completely unknown robot pose
in three-dimensional (3D) space (\textit{i.e}., attitude and position)
within an unknown environment. Robot's pose is comprised of two elements:
robot's orientation, also known as attitude, and robot's position.
Estimating map of the environment given robot's pose constitutes a
mapping problem popular within computer science and robotics communities
\cite{thrun2002robotic}. On the other hand, recovering robot's pose
within a known environment is referred to as pose estimation problem
long-established and well-detailed among robotics and control community
\cite{hashim2019SE3Det,zlotnik2018higher}. When neither the robot's
pose nor the map of the environment are known, the problem is termed
Simultaneous Localization and Mapping (SLAM). SLAM concurrently maps
the environment and localizes the robot with respect to the map. Unreliability
of absolute positioning systems, such as global positioning systems,
in occluded environments makes SLAM indispensable for a number of
applications, such as terrain mapping, multipurpose household robots,
mine exploration, locating missing terrestrial objects, reef monitoring,
surveillance, and others. Thus, for over a decade, SLAM and SLAM-related
applications have been a fundamental and widely-explored problem \cite{guo2020real,durrant2006simultaneous,sazdovski2015implicit,hashim2020LetterSLAM,zlotnik2018SLAM,li2018autonomous,milford2008mapping,sim2007study}.

The SLAM problem is traditionally addressed employing the measurements
available in the body-frame of a moving robot. Due to the fact that
measurements are contaminated with uncertain elements, SLAM estimation
requires a robust filter. The problem of SLAM estimation is conventionally
tackled using Gaussian filters or nonlinear deterministic filters.
Over ten years ago, several Gaussian filters for SLAM tailored specifically
to the task of estimating the robot state and the surrounding landmarks
were proposed. Gaussian solutions include MonoSLAM using real-time
single camera \cite{milford2008mapping}, FastSLAM using scalable
approach \cite{eade2006scalable}, incremental SLAM \cite{kaess2011isam2},
unscented Kalman filter (UKF) \cite{huang2013quadratic}, particle
filter \cite{sim2007study}, invariant EKF \cite{barczyk2014experimental},
in addition to others. These solutions account for uncertainties and
rely on probabilistic framework. SLAM problem presents a number of
open challenges, namely consistency \cite{dissanayake2011review},
computational cost and solution complexity \cite{cadena2016past},
as well as landmarks in motion. Other significant challenges that
hinder SLAM estimation process are as follows: 1) complexity of simultaneous
localization and mapping further complicated by 3D motion of the robot,
2) duality of the problem that requires simultaneous pose and map
estimation, and most importantly 3) high nonlinearity of the SLAM
problem. To address nonlinearity, it is important to note that true
motion dynamics of SLAM are composed of robot's pose dynamics and
landmark dynamics. Firstly, the highly nonlinear pose dynamics of
a robot traveling in 3D space are modeled on the Lie group of the
special Euclidean group $\mathbb{SE}\left(3\right)$. Secondly, robot's
attitude is an essential part of the landmark dynamics, and therefore
the attitude is described according to the Special Orthogonal Group
$\mathbb{SO}\left(3\right)$. Thereby, the key to successfully SLAM
estimation lies in utilizing filter design that captures the true
nonlinear structure of the problem.

Novel nonlinear attitude and pose filters evolved on $\mathbb{SO}\left(3\right)$
\cite{hashim2018SO3Stochastic,hashim2019SO3Wiley,jensen2011generalized,zamani2013minimum}
and on $\mathbb{SE}\left(3\right)$ \cite{hashim2019SE3Det,zlotnik2018higher,hashim2020SE3Stochastic}
enabled the development of nonlinear filters for SLAM. The fact that
nonlinear attitude and pose filters mimic the true attitude and pose
dynamics, served as a motivation for adopting the Lie group of $\mathbb{SE}\left(3\right)$
in application to the SLAM problem \cite{strasdat2012local}. A dual
nonlinear filter comprised of a nonlinear filter for robot's pose
estimation and a Kalman filter for landmark observation was proposed
\cite{johansen2016globally}. However, nonlinear nature of the true
SLAM problem was not yet completely captured by the work in \cite{johansen2016globally}.
As a result, nonlinear filters for SLAM that use measurements of landmarks
and group velocity vectors directly were developed \cite{hashim2020LetterSLAM,zlotnik2018SLAM,hashim2020TITS_SLAM}.
The work in \cite{hashim2020LetterSLAM,zlotnik2018SLAM,hashim2020TITS_SLAM,hashim2021T_SMCS_SLAM}
considered unknown constant bias inherent in the group velocity vector
measurements.

To this end, two major challenges must be considered during the design
process of a nonlinear filter for SLAM: 1) the SLAM problem is modeled
on the Lie group of $\mathbb{SLAM}_{n}\left(3\right)$ which is highly
nonlinear; and 2) the true SLAM kinematics rely on a group of velocities,
namely angular velocity, translational velocity, and velocities of
landmarks expressed relative to the body-frame. As such, successful
estimation can be attained by designing a nonlinear filter that relies
on the previously mentioned group of velocities which are normally
corrupted with unknown noise as well as unknown constant bias components.
Moreover, noise components are distinguished by random behavior, and
it is well recognized that noise can negatively impact the output
performance \cite{stojanovic_P1,hashim2018SO3Stochastic,hashim2020SE3Stochastic}.
To the best of the author knowledge, SLAM estimation problem has been
neither addressed nor solved in stochastic sense. As a result, it
is important to take into account any noise and/or bias components
present in the measurement process. Having this in mind and given
the following set of available measurements: group velocity vector,
$n$ landmarks and an inertial measurement unit (IMU), this paper
introduces a novel nonlinear stochastic filter for SLAM that has the
structure of nonlinear deterministic filters adapting it to the stochastic
sense. The main contributions of this paper are listed below: 
\begin{enumerate}
	\item[1)] A geometric nonlinear stochastic filter for SLAM developed directly
	on the Lie group of $\mathbb{SLAM}_{n}\left(3\right)$ which exactly
	follows the nonlinear structure of the true SLAM problem is proposed.
	\item[2)] The proposed nonlinear stochastic filter accounts for unknown constant
	bias and random noise attached to the group velocity measurements,
	unlike \cite{hashim2020LetterSLAM,zlotnik2018SLAM}.
	\item[3)] The closed loop error signals of the Lyapunov candidate function
	are shown to be semi-globally uniformly ultimately bounded (SGUUB)
	in mean square.
	\item[4)] The proposed stochastic filter involves gain mapping that takes into
	account cross coupling between the innovation of pose and landmarks.
\end{enumerate}
The rest of the paper is structured in the following manner: Section
\ref{sec:Preliminaries-and-Math} contains an overview of the preliminaries
as well as introduces mathematical notation, the Lie group of $\mathbb{SO}\left(3\right)$,
$\mathbb{SE}\left(3\right)$, and $\mathbb{SLAM}_{n}\left(3\right)$.
Section \ref{sec:SE3_Problem-Formulation} describes the SLAM problem,
true motion kinematics and formulates the SLAM problem in a stochastic
sense. Section \ref{sec:SLAM_Filter} outlines a common structure
of nonlinear deterministic filter for SLAM on $\mathbb{SLAM}_{n}\left(3\right)$
and then proposes a nonlinear stochastic filter design on $\mathbb{SLAM}_{n}\left(3\right)$.
Section \ref{sec:SE3_Simulations} shows the effectiveness of the
proposed stochastic filter. Finally, Section \ref{sec:SE3_Conclusion}
concludes the work.

\section{Math Notation and $\mathbb{SLAM}_{n}\left(3\right)$ Preliminaries
	\label{sec:Preliminaries-and-Math}}

Throughout this paper two frames of reference are used: $\left\{ \mathcal{I}\right\} $
is a fixed inertial frame and $\left\{ \mathcal{B}\right\} $ is a
moving body-frame of a robot. $\mathbb{R}$, $\mathbb{R}_{+}$, and
$\mathbb{R}^{p\times q}$ denote sets of real numbers, nonnegative
real numbers, and a real space of dimension $p$-by-$q$, respectively.
$\mathbf{I}_{n}$ represents an identity matrix with dimension $n$,
$\underline{\mathbf{0}}_{n}$ represents a vector comprised of zeros,
and $\left\Vert y\right\Vert =\sqrt{y^{\top}y}$ represents Euclidean
norm for $y\in\mathbb{R}^{n}$. $\mathbb{P}\left\{ \cdot\right\} $,
$\mathbb{E}\left[\cdot\right]$, and ${\rm exp}\left(\cdot\right)$
denote probability, an expected value, and an exponential of a component,
respectively. $\mathcal{C}^{n}$ stands for a set of functions characterized
by continuous $n$th partial derivatives. $\mathcal{K}_{\infty}$
represents a set of functions whose elements are continuous and strictly
increasing. The Special Orthogonal Group $\mathbb{SO}\left(3\right)$
is expressed as 
\[
\mathbb{SO}\left(3\right)=\left\{ \left.R\in\mathbb{R}^{3\times3}\right|RR^{\top}=R^{\top}R=\mathbf{I}_{3}\text{, }{\rm det}\left(R\right)=+1\right\} 
\]
with ${\rm det\left(\cdot\right)}$ referring to a determinant of
a matrix, and $R\in\mathbb{SO}\left(3\right)$ being rigid-body's
orientation described in $\left\{ \mathcal{B}\right\} $, also known
as attitude. The Special Euclidean Group $\mathbb{SE}\left(3\right)$
is represented by
\[
\mathbb{SE}\left(3\right)=\left\{ \left.\boldsymbol{T}=\left[\begin{array}{cc}
R & P\\
\underline{\mathbf{0}}_{3}^{\top} & 1
\end{array}\right]\in\mathbb{R}^{4\times4}\right|R\in\mathbb{SO}\left(3\right),P\in\mathbb{R}^{3}\right\} 
\]
with $P\in\mathbb{R}^{3}$ being rigid-body's position and $R\in\mathbb{SO}\left(3\right)$
its orientation. $\boldsymbol{T}$ is used to express the rigid-body's
pose in 3D space and it is often referred to as a homogeneous transformation
matrix:
\begin{equation}
\boldsymbol{T}=\left[\begin{array}{cc}
R & P\\
\underline{\mathbf{0}}_{3}^{\top} & 1
\end{array}\right]\in\mathbb{SE}\left(3\right)\label{eq:T_SLAM}
\end{equation}
where $\underline{\mathbf{0}}_{3}$ denotes a zero column vector.
$\mathfrak{so}\left(3\right)$ represents the Lie-algebra associated
with $\mathbb{SO}\left(3\right)$ with
\begin{equation}
\mathfrak{so}\left(3\right)=\left\{ \left.\left[h\right]_{\times}\in\mathbb{R}^{3\times3}\right|\left[h\right]_{\times}^{\top}=-\left[h\right]_{\times},h\in\mathbb{R}^{3}\right\} \label{eq:SLAM_so3}
\end{equation}
such that $\left[h\right]_{\times}$ stands for a skew symmetric matrix.
The related map of \eqref{eq:SLAM_so3} $\left[\cdot\right]_{\times}:\mathbb{R}^{3}\rightarrow\mathfrak{so}\left(3\right)$
is
\[
\left[h\right]_{\times}=\left[\begin{array}{ccc}
0 & -h_{3} & h_{2}\\
h_{3} & 0 & -h_{1}\\
-h_{2} & h_{1} & 0
\end{array}\right]\in\mathfrak{so}\left(3\right),\hspace{1em}h=\left[\begin{array}{c}
h_{1}\\
h_{2}\\
h_{3}
\end{array}\right]
\]
and $\left[y\right]_{\times}h=y\times h$ where $\times$ represents
a cross product for $h,y\in\mathbb{R}^{3}$. $\mathfrak{se}\left(3\right)$
is the Lie-algebra associated with $\mathbb{SE}\left(3\right)$ defined
as 
\[
\mathfrak{se}\left(3\right)=\left\{ \left[U\right]_{\wedge}\in\mathbb{R}^{4\times4}\left|\exists\Omega,V\in\mathbb{R}^{3}:\left[U\right]_{\wedge}=\left[\begin{array}{cc}
\left[\Omega\right]_{\times} & V\\
\underline{\mathbf{0}}_{3}^{\top} & 0
\end{array}\right]\right.\right\} 
\]
with $\left[\cdot\right]_{\wedge}$ being a wedge operator. The related
wedge map $\left[\cdot\right]_{\wedge}:\mathbb{R}^{6}\rightarrow\mathfrak{se}\left(3\right)$
is defined by
\begin{equation}
\left[U\right]_{\wedge}=\left[\begin{array}{cc}
\left[\Omega\right]_{\times} & V\\
\underline{\mathbf{0}}_{3}^{\top} & 0
\end{array}\right]\in\mathfrak{se}\left(3\right),\hspace{1em}U=\left[\begin{array}{c}
\Omega\\
V
\end{array}\right]\in\mathbb{R}^{6}\label{eq:SLAM_wedge}
\end{equation}
The inverse mapping of $\left[\cdot\right]_{\times}$ is given by
$\mathbf{vex}:\mathfrak{so}\left(3\right)\rightarrow\mathbb{R}^{3}$
such that
\begin{equation}
\mathbf{vex}\left(\left[h\right]_{\times}\right)=h,\hspace{1em}\forall h\in\mathbb{R}^{3}\label{eq:SLAM_VEX}
\end{equation}
Consider $\boldsymbol{\mathcal{P}}_{a}$ to be an anti-symmetric projection
on $\mathfrak{so}\left(3\right)$ 
\begin{equation}
\boldsymbol{\mathcal{P}}_{a}\left(H\right)=\frac{1}{2}\left(H-H^{\top}\right)\in\mathfrak{so}\left(3\right),\hspace{1em}\forall H\in\mathbb{R}^{3\times3}\label{eq:SLAM_Pa}
\end{equation}
Let $\boldsymbol{\Upsilon}\left(\cdot\right)$ denote a composition
mapping of $\boldsymbol{\Upsilon}=\mathbf{vex}\circ\boldsymbol{\mathcal{P}}_{a}$
where
\begin{equation}
\boldsymbol{\Upsilon}\left(H\right)=\mathbf{vex}\left(\boldsymbol{\mathcal{P}}_{a}\left(H\right)\right)\in\mathbb{R}^{3},\hspace{1em}\forall H\in\mathbb{R}^{3\times3}\label{eq:SLAM_VEX_a}
\end{equation}
Define $\left\Vert R\right\Vert _{{\rm I}}$ as a normalized Euclidean
distance of $R\in\mathbb{SO}\left(3\right)$ with
\begin{equation}
\left\Vert R\right\Vert _{{\rm I}}=\frac{1}{4}{\rm Tr}\left\{ \mathbf{I}_{3}-R\right\} \in\left[0,1\right]\label{eq:SLAM_Ecul_Dist}
\end{equation}
Define $\overset{\circ}{\mathcal{M}}$ and $\overline{\mathcal{M}}$
as submanifolds of $\mathbb{R}^{4}$ 
\begin{align*}
\overset{\circ}{\mathcal{M}} & =\left\{ \left.\overset{\circ}{x}=\left[\begin{array}{cc}
x^{\top} & 0\end{array}\right]^{\top}\in\mathbb{R}^{4}\right|x\in\mathbb{R}^{3}\right\} \\
\overline{\mathcal{M}} & =\left\{ \left.\overline{x}=\left[\begin{array}{cc}
x^{\top} & 1\end{array}\right]^{\top}\in\mathbb{R}^{4}\right|x\in\mathbb{R}^{3}\right\} 
\end{align*}
Let $\mathbb{SLAM}_{n}\left(3\right)=\mathbb{SE}\left(3\right)\times\overline{\mathcal{M}}^{n}$
be a Lie group
\begin{equation}
\mathbb{SLAM}_{n}\left(3\right)=\left\{ X=\left(\boldsymbol{T},\overline{{\rm p}}\right)\left|\boldsymbol{T}\in\mathbb{SE}\left(3\right),\overline{{\rm p}}\in\overline{\mathcal{M}}^{n}\right.\right\} \label{eq:SLAM_SLAM_X}
\end{equation}
with $\overline{{\rm p}}=\left[\overline{{\rm p}}_{1},\overline{{\rm p}}_{2},\ldots,\overline{{\rm p}}_{n}\right]\in\overline{\mathcal{M}}^{n}$
and $\overline{\mathcal{M}}^{n}=\overline{\mathcal{M}}\times\overline{\mathcal{M}}\times\cdots\times\overline{\mathcal{M}}$.
$\mathfrak{slam}_{n}\left(3\right)=\mathfrak{se}\left(3\right)\times\overset{\circ}{\mathcal{M}}^{n}$
denotes a tangent space at the identity element of $X=\left(\boldsymbol{T},\overline{{\rm p}}\right)\in\mathbb{SLAM}_{n}\left(3\right)$
represented as
\begin{equation}
\mathfrak{slam}_{n}\left(3\right)=\left\{ \mathcal{Y}=\left(\left[U\right]_{\wedge},\overset{\circ}{{\rm v}}\right)\left|\left[U\right]_{\wedge}\in\mathfrak{se}\left(3\right),\overset{\circ}{{\rm v}}\in\overset{\circ}{\mathcal{M}}^{n}\right.\right\} \label{eq:SLAM_SLAM_Y}
\end{equation}
with $\overset{\circ}{{\rm v}}=\left[\overset{\circ}{{\rm v}}_{1},\overset{\circ}{{\rm v}}_{2},\ldots,\overset{\circ}{{\rm v}}_{n}\right]\in\overset{\circ}{\mathcal{M}}^{n}$
and $\overset{\circ}{\mathcal{M}}^{n}=\overset{\circ}{\mathcal{M}}\times\overset{\circ}{\mathcal{M}}\times\cdots\times\overset{\circ}{\mathcal{M}}$.
The following identities will be utilized in filter derivations: 
\begin{align}
\left[Ra\right]_{\times}= & R\left[a\right]_{\times}R^{\top},\hspace{1em}a\in{\rm \mathbb{R}}^{3},R\in\mathbb{SO}\left(3\right)\label{eq:SLAM_Identity1}\\
\left[b\times a\right]_{\times}= & ab^{\top}-ba^{\top},\hspace{1em}a,b\in{\rm \mathbb{R}}^{3}\label{eq:SLAM_Identity2}\\
\left[a\right]_{\times}^{2}= & -||a||^{2}\mathbf{I}_{3}+aa^{\top},\hspace{1em}a\in{\rm \mathbb{R}}^{3}\label{eq:SLAM_Identity4}\\
M\left[a\right]_{\times}+\left[a\right]_{\times}M= & {\rm Tr}\left\{ M\right\} \left[a\right]_{\times}-\left[Ma\right]_{\times},\nonumber \\
& \hspace{4em}a\in{\rm \mathbb{R}}^{3},M\in\mathbb{R}^{3\times3}\label{eq:SLAM_Identity5}\\
{\rm Tr}\left\{ \left[a\right]_{\times}M\right\} = & 0,\hspace{1em}a\in{\rm \mathbb{R}}^{3},M=M^{\top}\in\mathbb{R}^{3\times3}\label{eq:SLAM_Identity3}
\end{align}
\begin{align}
{\rm Tr}\left\{ M\left[a\right]_{\times}\right\} = & {\rm Tr}\left\{ \boldsymbol{\mathcal{P}}_{a}\left(M\right)\left[a\right]_{\times}\right\} =-2\mathbf{vex}\left(\boldsymbol{\mathcal{P}}_{a}\left(M\right)\right)^{\top}a,\nonumber \\
& \hspace{4em}a\in{\rm \mathbb{R}}^{3},M\in\mathbb{R}^{3\times3}\label{eq:SLAM_Identity6}
\end{align}

\section{SLAM Formulation in Stochastic Sense\label{sec:SE3_Problem-Formulation}}

The rigid-body's (vehicle's) attitude $R\in\mathbb{SO}\left(3\right)$,
a vital part of the robot's pose $\boldsymbol{T}\in\mathbb{SE}\left(3\right)$,
is expressed in the body-frame $R\in\left\{ \mathcal{B}\right\} $,
while its translation $P\in\mathbb{R}^{3}$ is expressed in the inertial-frame
$P\in\left\{ \mathcal{I}\right\} $. Let the map include $n$ landmarks
where ${\rm p}_{i}$ denotes location of the $i$th landmark defined
in the inertial-frame ${\rm p}_{i}\in\left\{ \mathcal{I}\right\} $
for all $i=1,2,\ldots,n$. SLAM problem considers the following two
elements to be completely unknown: 1) pose of the moving robot, and
2) landmarks within the environment $\overline{{\rm p}}=\left[\overline{{\rm p}}_{1},\overline{{\rm p}}_{2},\ldots,\overline{{\rm p}}_{n}\right]\in\overline{\mathcal{M}}^{n}$.
Accordingly, SLAM estimation problem given a set of measurements incorporates
two tasks executed concurrently: 1) estimation of the robot's pose
with respect to the environment landmarks, and 2) estimation of landmark
positions within the map. Figure \ref{fig:SLAM} illustrates the SLAM
estimation problem. 
\begin{figure*}
	\centering{}\includegraphics[scale=0.6]{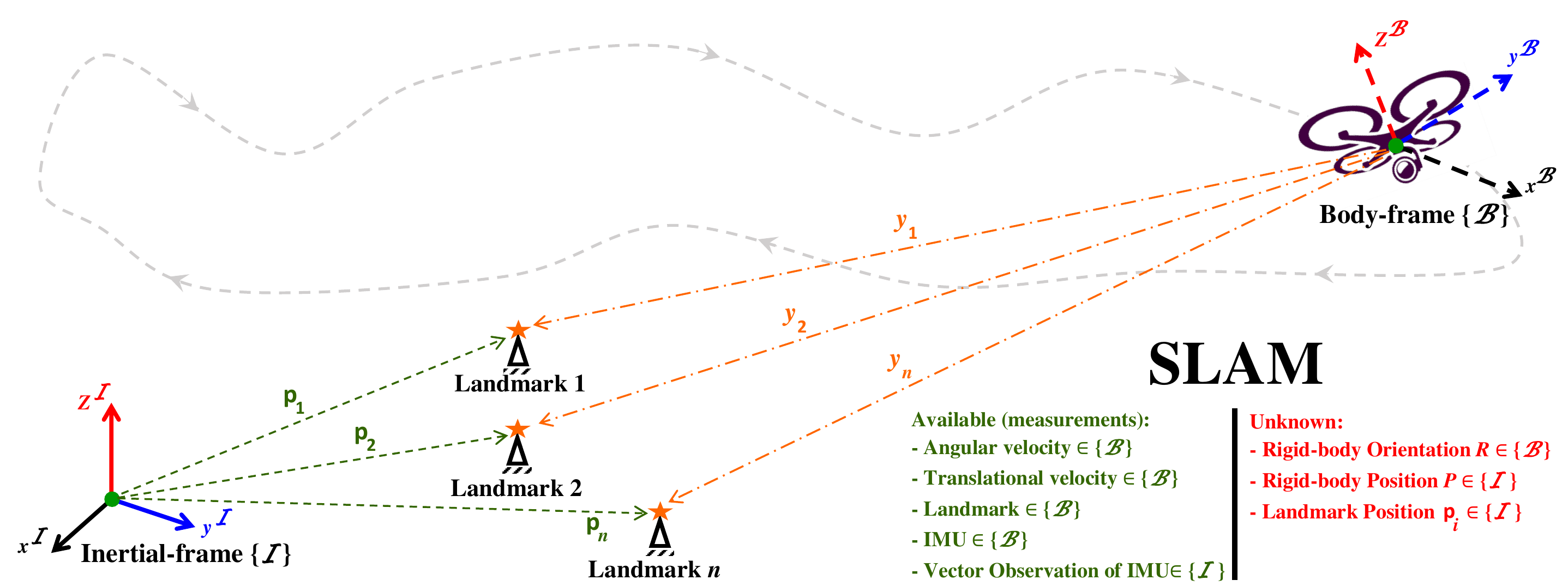}\caption{SLAM estimation problem.}
	\label{fig:SLAM}
\end{figure*}

\subsection{SLAM Kinematics and Measurements}

Let $X=\left(\boldsymbol{T},\overline{{\rm p}}\right)\in\mathbb{SLAM}_{n}\left(3\right)$
denote the true configuration of the SLAM problem with $\boldsymbol{T}\in\mathbb{SE}\left(3\right)$
as in \eqref{eq:T_SLAM} and $\overline{{\rm p}}=\left[\overline{{\rm p}}_{1},\overline{{\rm p}}_{2},\ldots,\overline{{\rm p}}_{n}\right]\in\overline{\mathcal{M}}^{n}$.
Notice that $X$ is unknown. A group of measurements is available
in $\left\{ \mathcal{B}\right\} $ and can be employed for SLAM estimation,
namely 1) body-frame measurements associated with attitude determination,
2) landmark measurements, and 3) group velocity measurements. Assume
that there are $n_{R}$ body-frame vectors suitable for attitude determination
and available for measurement defined by \cite{hashim2018SO3Stochastic,hashim2019SO3Wiley}
\[
\overset{\circ}{a}_{j}=\boldsymbol{T}^{-1}\overset{\circ}{r}_{j}+\overset{\circ}{b}_{j}^{a}+\overset{\circ}{n}_{j}^{a}\in\overset{\circ}{\mathcal{M}},\hspace{1em}j=1,2,\ldots,n_{R}
\]
or equivalently
\begin{equation}
a_{j}=R^{\top}r_{j}+b_{j}^{a}+n_{j}^{a}\in\mathbb{R}^{3}\label{eq:SLAM_Vect_R}
\end{equation}
where $r_{j}$ denotes known inertial-frame vector, $b_{j}^{a}$ denotes
unknown constant bias, and $n_{j}^{a}$ stands for unknown random
noise of the $j$th measurement. Note that the inverse of $\boldsymbol{T}$
is $\boldsymbol{T}^{-1}=\left[\begin{array}{cc}
R^{\top} & -R^{\top}P\\
\underline{\mathbf{0}}_{3}^{\top} & 1
\end{array}\right]\in\mathbb{SE}\left(3\right)$. The measurements in \eqref{eq:SLAM_Vect_R} exemplify a low cost
IMU. It is a common practice to normalize $r_{j}$ and $a_{j}$ in
\eqref{eq:SLAM_Vect_R} as follows
\begin{equation}
\upsilon_{j}^{r}=\frac{r_{j}}{\left\Vert r_{j}\right\Vert },\hspace{1em}\upsilon_{j}^{a}=\frac{a_{j}}{\left\Vert a_{j}\right\Vert }\label{eq:SLAM_Vector_norm}
\end{equation}
The normalized values in \eqref{eq:SLAM_Vector_norm} will be part
of the subsequent estimation. Consider combining the normalized vectors
into two distinct sets as follows
\begin{equation}
\begin{cases}
\upsilon^{r} & =\left[\upsilon_{1}^{r},\upsilon_{2}^{r},\ldots,\upsilon_{n_{R}}^{r}\right]\in\left\{ \mathcal{I}\right\} \\
\upsilon^{a} & =\left[\upsilon_{1}^{a},\upsilon_{2}^{a},\ldots,\upsilon_{n_{R}}^{a}\right]\in\left\{ \mathcal{B}\right\} 
\end{cases}\label{eq:SE3STCH_Set_R_Norm}
\end{equation}

\begin{rem}
	\label{rem:R_Marix}Rigid-body's attitude can be established provided
	that at least three non-collinear vectors in $\left\{ \mathcal{B}\right\} $
	along with their observations in $\left\{ \mathcal{I}\right\} $ are
	obtainable at each time sample. In case of $n_{R}=2$, the third measurement
	in $\left\{ \mathcal{B}\right\} $ and its observation in $\left\{ \mathcal{I}\right\} $
	is to be calculated via the cross product $\upsilon_{3}^{a}=\upsilon_{1}^{a}\times\upsilon_{2}^{a}$
	and $\upsilon_{3}^{r}=\upsilon_{1}^{r}\times\upsilon_{2}^{r}$, respectively
	ensuring that the two sets in \eqref{eq:SLAM_Vector_norm} are with
	rank 3.
\end{rem}
Assume that $n$ landmarks are available for measurement in the body-frame
via, for example, low-cost inertial vision units. The $i$th measurement
is as follows \cite{hashim2020SE3Stochastic,hashim2019SE3Det}:
\[
\overline{y}_{i}=\boldsymbol{T}^{-1}\overline{{\rm p}}_{i}+\overset{\circ}{b}_{i}^{y}+\overset{\circ}{n}_{i}^{y}\in\overline{\mathcal{M}},\hspace{1em}\forall i=1,2,\ldots,n
\]
or equivalently
\begin{equation}
y_{i}=R^{\top}\left({\rm p}_{i}-P\right)+b_{i}^{y}+n_{i}^{y}\in\mathbb{R}^{3}\label{eq:SLAM_Vec_Landmark}
\end{equation}
with $R$, $P$, and ${\rm p}_{i}$ representing the true attitude
and position of the robot, and landmark position, respectively, while
$b_{i}^{y}$ and $n_{i}^{y}$ stand for unknown constant bias and
random noise, respectively, for all $y_{i},b_{i}^{y},n_{i}^{y}\in\left\{ \mathcal{B}\right\} $.

\begin{assum}\label{Assumption:Feature}A minimum of three landmarks
	available for measurement is necessary to define a plane $\overline{y}=\left[\overline{y}_{1},\overline{y}_{2},\ldots,\overline{y}_{n}\right]\in\overline{\mathcal{M}}^{n}$.\end{assum}

Consider $\mathcal{Y}=\left(\left[U\right]_{\wedge},\overset{\circ}{{\rm v}}\right)\in\mathfrak{slam}_{n}\left(3\right)$
to be the true group velocity which is bounded and continuous with
$\overset{\circ}{{\rm v}}=\left[\overset{\circ}{{\rm v}}_{1},\overset{\circ}{{\rm v}}_{2},\ldots,\overset{\circ}{{\rm v}}_{n}\right]\in\overset{\circ}{\mathcal{M}}^{n}$.
Note that $\mathcal{Y}$ is given through sensor measurements. Hence,
the true motion dynamics of the vehicle's pose and $n$-landmarks
are
\begin{equation}
\begin{cases}
\dot{\boldsymbol{T}} & =\boldsymbol{T}\left[U\right]_{\wedge}\\
\dot{{\rm p}}_{i} & =R{\rm v}_{i},\hspace{1em}\forall i=1,2,\ldots,n
\end{cases}\label{eq:SLAM_True_dot}
\end{equation}
The dynamics in \eqref{eq:SLAM_True_dot} can be expressed as
\[
\begin{cases}
\dot{R} & =R\left[\Omega\right]_{\times}\\
\dot{P} & =RV\\
\dot{{\rm p}}_{i} & =R{\rm v}_{i},\hspace{1em}\forall i=1,2,\ldots,n
\end{cases}
\]
with $U=\left[\Omega^{\top},V^{\top}\right]^{\top}\in\mathbb{R}^{6}$
referring to the group velocity vector of the rigid-body where $\Omega\in\mathbb{R}^{3}$
is the true angular velocity and $V\in\mathbb{R}^{3}$ is the true
translational velocity. ${\rm v}_{i}\in\mathbb{R}^{3}$ defines the
$i$th linear velocity of the landmark in the moving-frame for all
$\Omega,V,{\rm v}_{i}\in\left\{ \mathcal{B}\right\} $. The measurements
of angular and translational velocity are defined as
\begin{equation}
\begin{cases}
\Omega_{m} & =\Omega+b_{\Omega}+n_{\Omega}\in\mathbb{R}^{3}\\
V_{m} & =V+b_{V}+n_{V}\in\mathbb{R}^{3}
\end{cases}\label{eq:SLAM_TVelcoity}
\end{equation}
where $b_{\Omega}$ and $b_{V}$ denote unknown constant bias, and
$n_{\Omega}$ and $n_{V}$ denote unknown random noise. Define the
group of velocity measurements, bias, and noise as $U_{m}=\left[\Omega_{m}^{\top},V_{m}^{\top}\right]^{\top}$,
$b_{U}=\left[b_{\Omega}^{\top},b_{V}^{\top}\right]^{\top}$, and $n_{U}=\left[n_{\Omega}^{\top},n_{V}^{\top}\right]^{\top}$,
respectively, for all $U_{m},b_{U},n_{U}\in\mathbb{R}^{6}$. This
work concerns exclusively fixed landmark environments, thereby $\dot{{\rm p}}_{i}=\underline{\mathbf{0}}_{3}$
and ${\rm v}_{i}=\underline{\mathbf{0}}_{3}$ $\forall i=1,2,\ldots,n$.

\subsection{SLAM Kinematics in Stochastic Sense}

\noindent Recall the expression of group velocity measurements in
\eqref{eq:SLAM_TVelcoity}. Since derivative of a Gaussian process
results a Gaussian process, the SLAM dynamics in \eqref{eq:SLAM_True_dot}
can be rewritten with respect to Brownian motion process vector $d\beta_{U}/dt\in\mathbb{R}^{6}$
\cite{khasminskii1980stochastic,jazwinski2007stochastic}. Assume
$\left\{ n_{U},t\geq t_{0}\right\} $ to be a vector representation
of the independent Brownian motion process 
\begin{equation}
n_{U}=\mathcal{Q}_{U}\frac{d\beta_{U}}{dt}\in\mathbb{R}^{6}\label{eq:SLAM_noise}
\end{equation}
where $\mathcal{Q}_{U}\in\mathbb{R}^{6\times6}$ denotes an unknown
nonzero nonnegative time-variant diagonal matrix whose elements are
bounded. The related covariance of the noise $n_{U}$ can be expressed
as $\mathcal{Q}_{U}^{2}=\mathcal{Q}_{U}\mathcal{Q}_{U}^{\top}$. The
following properties characterize the Brownian motion process \cite{ito1984lectures,jazwinski2007stochastic,deng2001stabilization,hashim2020SE3Stochastic,tong2011observer}:
\[
\mathbb{P}\left\{ \beta_{U}\left(0\right)=0\right\} =1,\hspace{1em}\mathbb{E}\left[d\beta_{U}/dt\right]=0,\hspace{1em}\mathbb{E}\left[\beta_{U}\right]=0
\]
In view of \eqref{eq:SLAM_True_dot}, \eqref{eq:SLAM_TVelcoity},
and \eqref{eq:SLAM_noise}, SLAM dynamics could be represented by
a stochastic differential equation 
\begin{equation}
\begin{cases}
d\boldsymbol{T} & =\boldsymbol{T}\left[U_{m}-b_{U}\right]_{\wedge}dt-\boldsymbol{T}\left[\mathcal{Q}_{U}d\beta_{U}\right]_{\wedge}\\
d{\rm p}_{i} & =R{\rm v}_{i}dt,\hspace{1em}\forall i=1,2,\ldots,n
\end{cases}\label{eq:SLAM_True_dot_STOCH}
\end{equation}
Or equivalently
\[
\begin{cases}
dR & =R\left[\Omega_{m}-b_{\Omega}\right]_{\times}dt-R\left[\mathcal{Q}_{\Omega}d\beta_{\Omega}\right]_{\times}\\
dP & =R\left(V_{m}-b_{V}\right)dt-R\mathcal{Q}_{V}d\beta_{V}\\
d{\rm p}_{i} & =R{\rm v}_{i}dt,\hspace{1em}\forall i=1,2,\ldots,n
\end{cases}
\]
where $U=U_{m}-b_{U}-n_{U}$ is considered. Given unknown bias $b_{U}$
and unknown time-variant covariance matrix $\mathcal{Q}_{U}$, with
the aim of achieving adaptive stabilization, define $\sigma$ as the
upper bound of $\mathcal{Q}_{U}^{2}$
\begin{equation}
\sigma=\left[\begin{array}{c}
{\rm max}\left\{ \mathcal{Q}_{\Omega\left(1,1\right)}^{2},\mathcal{Q}_{V\left(1,1\right)}^{2}\right\} \\
{\rm max}\left\{ \mathcal{Q}_{\Omega\left(2,2\right)}^{2},\mathcal{Q}_{V\left(2,2\right)}^{2}\right\} \\
{\rm max}\left\{ \mathcal{Q}_{\Omega\left(3,3\right)}^{2},\mathcal{Q}_{V\left(3,3\right)}^{2}\right\} 
\end{array}\right]\in\mathbb{R}^{3}\label{eq:SLAM_s_covariance}
\end{equation}
with ${\rm max}\left\{ \cdot\right\} $ being maximum value of the
corresponding elements. 

\begin{assum}\label{Assum:Boundedness} (Uniform boundedness of $b_{U}$
	and $\sigma$) Consider $b_{U}$ and $\sigma$ to belong to a known
	compact set $\varLambda_{U}$ with $b_{U},\sigma\in\varLambda_{U}\subset\mathbb{R}^{3}$,
	such that $b_{U}$ and $\sigma$ are upper bounded by a constant $\varPi$
	where $||\varLambda_{U}||\leq\overline{\varLambda}<\infty$.\end{assum}

\subsection{Error Criteria}

Define the pose estimate as
\[
\hat{\boldsymbol{T}}=\left[\begin{array}{cc}
\hat{R} & \hat{P}\\
\underline{\mathbf{0}}_{3}^{\top} & 1
\end{array}\right]\in\mathbb{SE}\left(3\right)
\]
with $\hat{R}$ being the estimate of $R$, and $\hat{P}$ being the
estimate of $P$ in \eqref{eq:T_SLAM}. Define $\overline{\hat{y}}_{i}=\hat{\boldsymbol{T}}^{-1}\overline{\hat{{\rm p}}}_{i}$
where $\hat{{\rm p}}_{i}$ is the $i$th landmark estimate of ${\rm p}_{i}$.
Let the pose error (true relative to estimated) be
\begin{align}
\tilde{\boldsymbol{T}}=\hat{\boldsymbol{T}}\boldsymbol{T}^{-1} & =\left[\begin{array}{cc}
\hat{R} & \hat{P}\\
\underline{\mathbf{0}}_{3}^{\top} & 1
\end{array}\right]\left[\begin{array}{cc}
R^{\top} & -R^{\top}P\\
\underline{\mathbf{0}}_{3}^{\top} & 1
\end{array}\right]\nonumber \\
& =\left[\begin{array}{cc}
\tilde{R} & \tilde{P}\\
\underline{\mathbf{0}}_{3}^{\top} & 1
\end{array}\right]\label{eq:SLAM_T_error}
\end{align}
with $\tilde{R}=\hat{R}R^{\top}$ being the error in orientation and
$\tilde{P}=\hat{P}-\tilde{R}P$ being the error in position of the
rigid-body. Pose estimation aims to asymptotically drive $\tilde{\boldsymbol{T}}\rightarrow\mathbf{I}_{4}$
in order to achieve this goal $\tilde{R}\rightarrow\mathbf{I}_{3}$
and $\tilde{P}\rightarrow\underline{\mathbf{0}}_{3}$. Let the landmark
position error (true relative to estimated) be
\begin{equation}
\overset{\circ}{e}_{i}=\overline{\hat{{\rm p}}}_{i}-\tilde{\boldsymbol{T}}\,\overline{{\rm p}}_{i},\hspace{1em}\forall i=1,2,\ldots,n\label{eq:SLAM_e}
\end{equation}
such that $\overset{\circ}{e}_{i}=\left[e_{i}^{\top},0\right]^{\top}\in\overset{\circ}{\mathcal{M}}$
and $\overline{\hat{{\rm p}}}_{i}=\left[\hat{{\rm p}}_{i}^{\top},1\right]^{\top}\in\overline{\mathcal{M}}$.
Note that $\overset{\circ}{e}_{i}=\overline{\hat{{\rm p}}}_{i}-\hat{\boldsymbol{T}}\boldsymbol{T}^{-1}\,\overline{{\rm p}}_{i}$,
and therefore from \eqref{eq:SLAM_Vec_Landmark} one has
\begin{equation}
\overset{\circ}{e}_{i}=\overline{\hat{{\rm p}}}_{i}-\hat{\boldsymbol{T}}\,\overline{y}_{i},\hspace{1em}\forall i=1,2,\ldots,n\label{eq:SLAM_e_Final}
\end{equation}
which leads to
\begin{align}
\overset{\circ}{e}_{i} & =\left[\begin{array}{c}
\hat{{\rm p}}_{i}\\
1
\end{array}\right]-\left[\begin{array}{cc}
\hat{R} & \hat{P}\\
\underline{\mathbf{0}}_{3}^{\top} & 1
\end{array}\right]\left[\begin{array}{c}
R^{\top}\left({\rm p}_{i}-P\right)\\
1
\end{array}\right]\nonumber \\
& =\left[\begin{array}{c}
\tilde{{\rm p}}_{i}-\tilde{P}\\
0
\end{array}\right]\in\overset{\circ}{\mathcal{M}}\label{eq:SLAM_e_tilde}
\end{align}
with $\tilde{{\rm p}}_{i}=\hat{{\rm p}}_{i}-\tilde{R}{\rm p}_{i}$
and $\tilde{P}=\hat{P}-\tilde{R}P$. Considering the fact that the
last row of the matrix in \eqref{eq:SLAM_e_tilde} is a zero, define
the stochastic differential equation of the error above as
\begin{align}
de_{i}= & \mathcal{F}_{i}dt+\mathcal{G}_{i}\mathcal{Q}_{U}d\beta_{U},\hspace{1em}\forall i=1,2,\ldots,n\label{eq:SLAM_e_dot_Stochastic}
\end{align}
where the stochastic dynamics in \eqref{eq:SLAM_e_dot_Stochastic}
are to be obtained in the stochastic filter. Taking in consideration
the group velocity in \eqref{eq:SLAM_TVelcoity} with $\hat{b}_{U}=\left[\hat{b}_{\Omega}^{\top},\hat{b}_{V}^{\top}\right]^{\top}$
being the unknown bias estimate of $b_{U}$, define the bias error
as
\begin{equation}
\begin{cases}
\tilde{b}_{\Omega} & =b_{\Omega}-\hat{b}_{\Omega}\\
\tilde{b}_{V} & =b_{V}-\hat{b}_{V}
\end{cases}\label{eq:SLAM_b_error}
\end{equation}
where $\tilde{b}_{U}=b_{U}-\hat{b}_{U}=\left[\tilde{b}_{\Omega}^{\top},\tilde{b}_{V}^{\top}\right]^{\top}\in\mathbb{R}^{6}$.
Also, consider $\hat{\sigma}$ to be the estimate of $\sigma$ in
\eqref{eq:SLAM_s_covariance}. Define the error between $\hat{\sigma}$
and $\sigma$ as follows
\begin{equation}
\tilde{\sigma}=\sigma-\hat{\sigma}\label{eq:SLAM_s_error}
\end{equation}
The subsequent Definitions and Lemmas are applicable in the derivation
process of the nonlinear stochastic estimator for SLAM.
\begin{defn}
	\label{def:Unstable-set}Let $\mathcal{U}_{s}$ be a non-attractive
	forward invariant unstable subset of $\mathbb{SO}\left(3\right)$
	\begin{equation}
	\mathcal{U}_{s}=\left\{ \left.\tilde{R}\left(0\right)\in\mathbb{SO}\left(3\right)\right|{\rm Tr}\{\tilde{R}(0)\}=-1\right\} \label{eq:SO3_PPF_STCH_SET}
	\end{equation}
	The only three possible scenarios for $\tilde{R}\left(0\right)\in\mathcal{U}_{s}$
	are: $\tilde{R}\left(0\right)={\rm diag}(1,-1,-1)$, $\tilde{R}\left(0\right)={\rm diag}(-1,1,-1)$,
	and $\tilde{R}\left(0\right)={\rm diag}(-1,-1,1)$.
\end{defn}
\begin{lem}
	\label{Lemm:SLAM_RM_I2}Define $\tilde{R}\in\mathbb{SO}\left(3\right)$,
	$M=M^{\top}\in\mathbb{R}^{3\times3}$ such that ${\rm rank}\left\{ M\right\} =3$
	and ${\rm Tr}\left\{ M\right\} =3$. Define $\breve{\mathbf{M}}={\rm Tr}\left\{ M\right\} \mathbf{I}_{3}-M$
	with $\underline{\lambda}=\underline{\lambda}(\breve{\mathbf{M}})$
	being the minimum singular value of $\breve{\mathbf{M}}$. Thereby,
	the following holds:
	\begin{align}
	||\tilde{R}M||_{{\rm I}} & \leq\frac{2}{\underline{\lambda}}\frac{||\mathbf{vex}\left(\boldsymbol{\mathcal{P}}_{a}(\tilde{R}M)\right)||^{2}}{1+{\rm Tr}\{\tilde{R}MM^{-1}\}}\label{eq:SLAM_lemm1_2}
	\end{align}
	\textbf{Proof. See Lemma 1 \cite{hashim2019SO3Wiley}.} 
\end{lem}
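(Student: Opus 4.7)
The plan is to reduce both sides of the inequality to scalar expressions via the angle-axis (Rodrigues) parametrization $\tilde{R} = \mathbf{I}_3 + \sin\theta\,[u]_{\times} + (1-\cos\theta)\,[u]_{\times}^{2}$ for some unit axis $u\in\mathbb{R}^{3}$ and angle $\theta$. The case $\theta=0$ (i.e.\ $\tilde{R}=\mathbf{I}_{3}$) makes both sides trivially zero, and the implicit assumption $\tilde{R}\notin\mathcal{U}_{s}$ keeps the denominator $1+\text{Tr}\{\tilde{R}MM^{-1}\} = 1+\text{Tr}\{\tilde{R}\} = 2(1+\cos\theta)$ strictly positive. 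The target is to turn the matrix inequality into a one-line scalar bound in $\alpha := u^{\top}Mu$ and a residual norm.

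First I would handle the left-hand side. Since $\text{Tr}(M)=3$, one has $\|\tilde{R}M\|_{\rm I} = \tfrac{1}{4}\text{Tr}\{(\mathbf{I}_{3}-\tilde{R})M\}$. Plugging in Rodrigues' expansion, identity \eqref{eq:SLAM_Identity3} kills the $[u]_{\times}M$ term because $M$ is symmetric, while \eqref{eq:SLAM_Identity4} writes $[u]_{\times}^{2}=uu^{\top}-\mathbf{I}_{3}$; this collapses the left-hand side to $\|\tilde{R}M\|_{\rm I} = \tfrac{(1-\cos\theta)(3-\alpha)}{4}$. The observation I will use at the end is $3-\alpha = u^{\top}\breve{\mathbf{M}}u \geq \underline{\lambda}$, by the spectral lower bound on the symmetric matrix $\breve{\mathbf{M}}$.

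Next I would compute $\mathbf{vex}(\boldsymbol{\mathcal{P}}_{a}(\tilde{R}M))$ in closed form. The $M$ piece of $\tilde{R}M$ is symmetric so vanishes under $\boldsymbol{\mathcal{P}}_{a}$. For the $\sin\theta\,[u]_{\times}M$ piece, identity \eqref{eq:SLAM_Identity5} collapses $[u]_{\times}M + M[u]_{\times}$ to $[\breve{\mathbf{M}}u]_{\times}$. For the $(1-\cos\theta)[u]_{\times}^{2}M$ piece, the symmetric $-M$ part drops and identity \eqref{eq:SLAM_Identity2} converts $uu^{\top}M - Muu^{\top}$ into $[Mu\times u]_{\times}$. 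Altogether $\mathbf{vex}(\boldsymbol{\mathcal{P}}_{a}(\tilde{R}M)) = \tfrac{\sin\theta}{2}\breve{\mathbf{M}}u + \tfrac{1-\cos\theta}{2}(Mu\times u)$. To avoid cross terms in the squared norm, I would decompose $Mu = \alpha u + w$ with $w\perp u$, so that $\breve{\mathbf{M}}u = (3-\alpha)u - w$ and $Mu\times u = w\times u$; since $u$, $w$, $w\times u$ are mutually orthogonal with $\|w\times u\|^{2}=\|w\|^{2}$, and using $\sin^{2}\theta+(1-\cos\theta)^{2}=2(1-\cos\theta)$, this produces $\|\mathbf{vex}(\boldsymbol{\mathcal{P}}_{a}(\tilde{R}M))\|^{2} = \tfrac{\sin^{2}\theta}{4}(3-\alpha)^{2} + \tfrac{1-\cos\theta}{2}\|w\|^{2}$.

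Finally I would assemble the inequality. Dividing by $2(1+\cos\theta)$ and using $\sin^{2}\theta = (1-\cos\theta)(1+\cos\theta)$, the right-hand side simplifies to $\tfrac{1-\cos\theta}{4\underline{\lambda}}\bigl[(3-\alpha)^{2} + \tfrac{2\|w\|^{2}}{1+\cos\theta}\bigr]$. Cancelling the strictly positive factor $\tfrac{1-\cos\theta}{4}$ against the left-hand side reduces the claim to $\underline{\lambda}(3-\alpha) \leq (3-\alpha)^{2} + \tfrac{2\|w\|^{2}}{1+\cos\theta}$, which is immediate from $3-\alpha\geq\underline{\lambda}$ and nonnegativity of the $\|w\|^{2}$ term. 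I expect the main obstacle to be the bookkeeping in the third step: carefully executing the skew projection so that identities \eqref{eq:SLAM_Identity2}, \eqref{eq:SLAM_Identity4}, and \eqref{eq:SLAM_Identity5} dovetail and the squared norm collapses without residual cross terms. Once that is done the inequality reduces to a single scalar spectral bound and is essentially finished.
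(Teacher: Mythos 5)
Your proof is correct: the paper itself offers no argument (it defers to Lemma~1 of the cited reference), and your angle--axis computation --- reducing both sides to the scalars $\theta$, $\alpha=u^{\top}Mu$, and $\|w\|$ via the identities \eqref{eq:SLAM_Identity2}--\eqref{eq:SLAM_Identity5}, then closing with $3-\alpha=u^{\top}\breve{\mathbf{M}}u\ge\underline{\lambda}$ --- is the standard route for this bound and checks out line by line, including the orthogonal decomposition $Mu=\alpha u+w$ that kills the cross terms. The one caveat is that the final step requires $\breve{\mathbf{M}}$ to be positive definite (so that its minimum singular value coincides with its minimum eigenvalue and $3-\alpha\ge\underline{\lambda}>0$); this is not literally forced by ``symmetric, rank $3$, trace $3$'' alone, but it does hold in the paper's setting where $M$ is a rank-$3$ positive-semidefinite sum of outer products.
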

\begin{defn}
	\label{def:SLAM_LV} Consider the stochastic differential system in
	\eqref{eq:SLAM_e_dot_Stochastic}, and let $\boldsymbol{{\rm V}}\left(e_{1},\ldots,e_{n}\right)$
	be a twice differentiable function $\boldsymbol{{\rm V}}\left(e_{1},\ldots,e_{n}\right)\in\mathcal{C}^{2}$.
	The differential operator $\mathcal{L}\boldsymbol{{\rm V}}\left(e_{1},\ldots,e_{n}\right)$
	is expressed as below 
	\[
	\mathcal{L}\boldsymbol{{\rm V}}\left(e_{1},\ldots,e_{n}\right)=\sum_{i=1}^{n}\left(\boldsymbol{{\rm V}}_{e_{i}}^{\top}\mathcal{F}_{i}+\frac{1}{2}{\rm Tr}\left\{ \mathcal{G}_{i}\mathcal{Q}_{U}^{2}\mathcal{G}_{i}^{\top}\boldsymbol{{\rm V}}_{e_{i}e_{i}}\right\} \right)
	\]
	such that $\boldsymbol{{\rm V}}_{e_{i}}=\partial\boldsymbol{{\rm V}}/\partial e_{i}$,
	and $\boldsymbol{{\rm V}}_{e_{i}e_{i}}=\partial^{2}\boldsymbol{{\rm V}}/\partial e_{i}^{2}$
	$\forall i=1,2,\ldots,n$.
\end{defn}
\begin{defn}
	\label{def:SLAM_SGUUB}\cite{hashim2018SO3Stochastic,hashim2020SE3Stochastic,ji2006adaptive}
	Consider the stochastic differential system in \eqref{eq:SLAM_e_dot_Stochastic}
	with trajectory $e_{i}$ being SGUUB if for a given compact set $\Sigma\in\mathbb{R}^{4}$
	and any $e_{i}\left(t_{0}\right)$, there exists a positive constant
	$\kappa>0$, and a time constant $\tau=\tau\left(\kappa,e_{i}\left(t_{0}\right)\right)$
	with $\mathbb{E}\left[\left\Vert e_{i}\left(t_{0}\right)\right\Vert \right]<\kappa,\forall t>t_{0}+\tau$. 
\end{defn}
\begin{lem}
	\label{Lemm:SLAM_deng} \cite{deng2001stabilization} Consider the
	stochastic dynamics in \eqref{eq:SLAM_e_dot_Stochastic} to be assigned
	with a potential function $\boldsymbol{{\rm V}}\in\mathcal{C}^{2}$
	where $\boldsymbol{{\rm V}}:\mathbb{R}^{3}\rightarrow\mathbb{R}_{+}$.
	Suppose there exist class $\mathcal{K}_{\infty}$ functions $\bar{\alpha}_{1}\left(\cdot\right)$
	and $\bar{\alpha}_{2}\left(\cdot\right)$, constants $\eta_{1}>0$
	and $\eta_{2}\geq0$ such that 
	\begin{equation}
	\bar{\alpha}_{1}\left(e_{1},\ldots,e_{n}\right)\leq\boldsymbol{{\rm V}}\left(e_{1},\ldots,e_{n}\right)\leq\bar{\alpha}_{2}\left(e_{1},\ldots,e_{n}\right)\label{eq:SLAM_Vfunction_Lyap}
	\end{equation}
	\begin{align}
	\mathcal{L}\boldsymbol{{\rm V}}\left(e_{1},\ldots,e_{n}\right)= & \sum_{i=1}^{n}\left(\boldsymbol{{\rm V}}_{e_{i}}^{\top}\mathcal{F}_{i}+\frac{1}{2}{\rm Tr}\left\{ \mathcal{G}_{i}\mathcal{Q}_{U}^{2}\mathcal{G}_{i}^{\top}\boldsymbol{{\rm V}}_{e_{i}e_{i}}\right\} \right)\nonumber \\
	\leq & -\eta_{1}\boldsymbol{{\rm V}}\left(e_{1},\ldots,e_{n}\right)+\eta_{2}\label{eq:SLAM_dVfunction_Lyap}
	\end{align}
	then for $e_{i}\in\mathbb{R}^{4}$, there is almost a unique strong
	solution on $\left[0,\infty\right)$ for the stochastic dynamics in
	\eqref{eq:SLAM_e_dot_Stochastic}. Moreover, the solution $e_{i}$
	is bounded in probability where
	\begin{equation}
	\mathbb{E}\left[\boldsymbol{{\rm V}}\left(e_{1},\ldots,e_{n}\right)\right]\leq\boldsymbol{{\rm V}}\left(e_{1}\left(0\right),\ldots,e_{n}\left(0\right)\right){\rm exp}\left(-\eta_{1}t\right)+\frac{\eta_{2}}{\eta_{1}}\label{eq:SLAM_EVfunction_Lyap}
	\end{equation}
	In addition, if the inequality in \eqref{eq:SLAM_EVfunction_Lyap}
	is met, $e_{i}$ in \eqref{eq:SLAM_e_dot_Stochastic} is SGUUB in
	the mean square. 
\end{lem}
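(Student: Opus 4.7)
The plan is to split the claim into two parts: first establish the exponential moment estimate \eqref{eq:SLAM_EVfunction_Lyap} as an \emph{a priori} bound, then leverage that bound to rule out finite-time explosion and secure the almost surely unique global strong solution. The SGUUB conclusion will follow as a direct corollary.

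For the moment estimate, I would apply It\^o's formula to the auxiliary process $M_t = e^{\eta_1 t}\boldsymbol{{\rm V}}(e_1(t),\ldots,e_n(t))$. Using the dynamics \eqref{eq:SLAM_e_dot_Stochastic} together with the differential operator in Definition \ref{def:SLAM_LV}, this expands to
\begin{equation*}
dM_t = e^{\eta_1 t}\bigl(\eta_1\boldsymbol{{\rm V}} + \mathcal{L}\boldsymbol{{\rm V}}\bigr)\,dt + e^{\eta_1 t}\sum_{i=1}^{n}\boldsymbol{{\rm V}}_{e_i}^{\top}\mathcal{G}_i\mathcal{Q}_U\,d\beta_U,
\end{equation*}
and the hypothesis \eqref{eq:SLAM_dVfunction_Lyap} collapses the drift to at most $\eta_2 e^{\eta_1 t}\,dt$. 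After localizing with a sequence of stopping times so that the It\^o integral is a genuine martingale, I would integrate from $0$ to $t$, take expectations to annihilate the $d\beta_U$ term, divide through by $e^{\eta_1 t}$, and pass to the limit in the localizing sequence to recover \eqref{eq:SLAM_EVfunction_Lyap} exactly.

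For existence and uniqueness, local Lipschitz regularity of $\mathcal{F}_i$ and $\mathcal{G}_i\mathcal{Q}_U$ (inherited from the smooth SLAM kinematics and the $\mathcal{C}^2$ potential) guarantees a unique strong solution up to the explosion time $\tau_\infty=\lim_{R\to\infty}\tau_R$, with $\tau_R$ the first exit time from a ball of radius $R$. Applying the \emph{a priori} bound to the stopped process, and combining it with the left inequality $\bar{\alpha}_1(\cdot)\leq\boldsymbol{{\rm V}}$ from \eqref{eq:SLAM_Vfunction_Lyap} and Chebyshev's inequality, forces $\mathbb{P}\{\tau_R\leq t\}\to 0$ as $R\to\infty$, because $\bar{\alpha}_1\in\mathcal{K}_\infty$. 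Hence $\tau_\infty=\infty$ almost surely and the solution extends globally. For SGUUB in the mean square, given any $\kappa>\bar{\alpha}_1^{-1}(\eta_2/\eta_1)$ one selects $\tau$ so large that the transient $\boldsymbol{{\rm V}}(e(0))e^{-\eta_1 t}$ lies below $\bar{\alpha}_1(\kappa)-\eta_2/\eta_1$ for all $t>t_0+\tau$, which is precisely the condition required by Definition \ref{def:SLAM_SGUUB}.

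The hard part will be the non-explosion step: one cannot apply It\^o's formula and take expectations on the unstopped process directly, since the stochastic integrand need not be square-integrable \emph{a priori} when solutions can blow up. This is where the $\mathcal{K}_\infty$ lower bound on $\boldsymbol{{\rm V}}$ earns its keep, enabling the Lyapunov-based stopping-time bootstrap above. Once the global solution is secured, the remaining algebra is routine; the only other point requiring care is verifying that the diffusion term contributes exactly $\tfrac{1}{2}{\rm Tr}\{\mathcal{G}_i\mathcal{Q}_U^{2}\mathcal{G}_i^{\top}\boldsymbol{{\rm V}}_{e_ie_i}\}$ to $\mathcal{L}\boldsymbol{{\rm V}}$ (with no stray factors), so that the It\^o correction matches Definition \ref{def:SLAM_LV} when substituted back into the hypothesis.
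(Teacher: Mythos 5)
The paper does not prove this lemma at all: it is imported verbatim from \cite{deng2001stabilization}, with the text explicitly deferring existence, uniqueness, and the bound \eqref{eq:SLAM_EVfunction_Lyap} to that reference. Your sketch reconstructs the standard argument behind that cited result --- It\^o's formula on $e^{\eta_1 t}\boldsymbol{{\rm V}}$, localization by exit times, expectation and Fatou, then non-explosion via Chebyshev against the $\mathcal{K}_\infty$ lower bound $\bar{\alpha}_1$ --- and the logic is sound, including your correct observation that the moment estimate must first be derived for the stopped process before it can be bootstrapped into global existence. The only point worth tightening is the final SGUUB step: passing from $\mathbb{E}[\boldsymbol{{\rm V}}]\leq\bar{\alpha}_1^{-1}(\cdot)$-type bounds to a bound on $\mathbb{E}[\Vert e_i\Vert]$ needs either convexity of $\bar{\alpha}_1$ (for Jensen) or a Markov-inequality formulation, but this is a standard and minor repair.
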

The existence of a unique solution and proof of Lemma \ref{Lemm:SLAM_deng}
can be found in \cite{deng2001stabilization}.
\begin{lem}
	\label{lem:SLAM_Young} (Young’s inequality) Let $a\in\mathbb{R}^{n}$
	and $b\in\mathbb{R}^{n}$. Define $c_{1}>1$ and $c_{2}>1$ such that
	$\left(c_{1}-1\right)\left(c_{2}-1\right)=1$, and $\varrho>0$ as
	a small constant. Consequently, the following holds:
	\begin{align}
	a^{\top}b & \leq\left(1/c_{1}\right)\varrho^{c_{1}}\left\Vert a\right\Vert ^{c_{1}}+\left(1/c_{2}\right)\varrho^{-c_{2}}\left\Vert b\right\Vert ^{c_{2}}\label{eq:SO3STCH_lem_ineq}
	\end{align}
\end{lem}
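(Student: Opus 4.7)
The plan is to reduce the stated inequality to the classical scalar Young inequality together with the Cauchy–Schwarz inequality. First I would rewrite the constraint $(c_1-1)(c_2-1)=1$ in the more familiar conjugate-exponent form by expanding: $c_1 c_2 - c_1 - c_2 + 1 = 1$ gives $c_1 c_2 = c_1+c_2$, which upon dividing by $c_1 c_2$ yields $1/c_1 + 1/c_2 = 1$. Thus $c_1$ and $c_2$ are Hölder conjugates, and the scalar Young inequality $xy \le x^{c_1}/c_1 + y^{c_2}/c_2$ applies to all $x,y\ge 0$.

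Next I would bound the left-hand side by the magnitudes using Cauchy–Schwarz, $a^\top b \le \|a\|\|b\|$. I then introduce the small parameter $\varrho$ by writing $\|a\|\|b\| = (\varrho\|a\|)(\varrho^{-1}\|b\|)$, so that applying the scalar Young inequality with $x=\varrho\|a\|$ and $y=\varrho^{-1}\|b\|$ produces
\[
a^\top b \le \|a\|\|b\| \le \frac{(\varrho\|a\|)^{c_1}}{c_1} + \frac{(\varrho^{-1}\|b\|)^{c_2}}{c_2} = \frac{1}{c_1}\varrho^{c_1}\|a\|^{c_1} + \frac{1}{c_2}\varrho^{-c_2}\|b\|^{c_2},
\]
which is the claim.

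The only non-routine ingredient is the scalar Young inequality itself, which I would justify in one of two standard ways: either by the concavity of $\log$ (so that $\log(\alpha x^{c_1} + \beta y^{c_2}) \ge \alpha \log x^{c_1} + \beta \log y^{c_2}$ with weights $\alpha=1/c_1$, $\beta=1/c_2$ summing to $1$, yielding $xy \le x^{c_1}/c_1 + y^{c_2}/c_2$ upon exponentiation), or equivalently as a Fenchel–Young inequality applied to the convex conjugate pair $f(x)=x^{c_1}/c_1$ and $f^*(y)=y^{c_2}/c_2$. Either route is a short one-line calculus argument, with no real obstacle; the entire proof is genuinely elementary once the conjugacy condition is recognized.

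The main thing to be careful about is simply the algebraic rewriting of the constraint $(c_1-1)(c_2-1)=1$ into the standard $1/c_1+1/c_2=1$, and the bookkeeping of $\varrho$ in the substitution; both are straightforward, so no step presents a substantive obstacle.
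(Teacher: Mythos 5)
Your proof is correct. The paper itself offers no proof of this lemma---it is quoted as the standard Young's inequality---so there is nothing to deviate from; your argument (rewriting $(c_1-1)(c_2-1)=1$ as $1/c_1+1/c_2=1$, applying Cauchy--Schwarz, and then the scalar Young inequality to $\varrho\|a\|$ and $\varrho^{-1}\|b\|$) is the standard derivation and all the algebra checks out.
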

Prior to moving forward, it is important to recall that the true SLAM
dynamics in \eqref{eq:SLAM_True_dot} 1) are nonlinear and 2) are
posed on the Lie group of $\mathbb{SLAM}_{n}\left(3\right)=\mathbb{SE}\left(3\right)\times\overline{\mathcal{M}}^{n}$
where $X=\left(\boldsymbol{T},\overline{{\rm p}}\right)\in\mathbb{SLAM}_{n}\left(3\right)$.
Additionally, the tangent space of $\mathbb{SLAM}_{n}\left(3\right)$
is $\mathfrak{slam}_{n}\left(3\right)=\mathfrak{se}\left(3\right)\times\overset{\circ}{\mathcal{M}}^{n}$
such that and $\mathcal{Y}=\left(\left[U\right]_{\wedge},\overset{\circ}{{\rm v}}\right)\in\mathfrak{slam}_{n}\left(3\right)$.
With the aim of proposing a robust stochastic filter able to produce
good results, the proposed filter design should imitate the true nonlinearity
of the SLAM problem and should be modeled on the Lie group of $\mathbb{SLAM}_{n}\left(3\right)$
with the tangent space $\mathfrak{slam}_{n}\left(3\right)$. Complying
with the above-mentioned requirements, the structure of the stochastic
filter is $\hat{X}=\left(\hat{\boldsymbol{T}},\overline{\hat{{\rm p}}}\right)\in\mathbb{SLAM}_{n}\left(3\right)$
and $\hat{\mathcal{Y}}=\left([\hat{U}]_{\wedge},\overset{\circ}{\hat{{\rm v}}}\right)\in\mathfrak{slam}_{n}\left(3\right)$
with $\hat{\boldsymbol{T}}\in\mathbb{SE}\left(3\right)$ and $\overline{\hat{{\rm p}}}=\left[\overline{\hat{{\rm p}}}_{1},\ldots,\overline{\hat{{\rm p}}}_{n}\right]\in\overline{\mathcal{M}}^{n}$
being pose estimates and landmark positions, respectively, and $\hat{U}\in\mathfrak{se}\left(3\right)$
and $\overset{\circ}{\hat{{\rm v}}}=\left[\overset{\circ}{\hat{{\rm v}}}_{1},\ldots,\overset{\circ}{\hat{{\rm v}}}_{n}\right]\in\overset{\circ}{\mathcal{M}}^{n}$
being velocities to be designed in the following Section. It is worth
noting that $\overset{\circ}{\hat{{\rm v}}}_{i}=\left[\hat{{\rm v}}_{i}^{\top},0\right]\in\overset{\circ}{\mathcal{M}}$
and $\overline{\hat{{\rm p}}}_{i}=\left[\hat{{\rm p}}_{i}^{\top},1\right]^{\top}\in\overline{\mathcal{M}}$
for all $i=1,2,\ldots,n$ and $\hat{{\rm v}}_{i},\hat{{\rm p}}_{i}\in\mathbb{R}^{3}$.

\section{Nonlinear Stochastic Filter Design \label{sec:SLAM_Filter}}

The SLAM nonlinear stochastic filter design is proposed in this Section.
With the aim of defining the concept of the nonlinear SLAM filtering
and paving the way for the novel nonlinear stochastic filter solution
presented in the second subsection, the first subsection introduces
a nonlinear deterministic filter that operates based only on the surrounding
landmark measurements which is similar in the structure to \cite{hashim2020LetterSLAM,zlotnik2018SLAM}.
In contrast to the deterministic filter, the novel nonlinear stochastic
SLAM filter relies on measurements collected by a low-cost IMU and
measurements of the landmarks. The first simple filter will provide
a benchmark for the proposed stochastic solution. 

\subsection{Nonlinear Deterministic Filter Design without IMU\label{subsec:Det_without_IMU}}

Consider the nonlinear filter design for SLAM:

\begin{align}
\dot{\hat{\boldsymbol{T}}} & =\hat{\boldsymbol{T}}\left[U_{m}-\hat{b}_{U}-W_{U}\right]_{\wedge}\label{eq:SLAM_T_est_dot_f1}\\
\dot{{\rm \hat{p}}}_{i} & =-k_{p}e_{i},\hspace{1em}i=1,2,\ldots,n\label{eq:SLAM_p_est_dot_f1}\\
\dot{\hat{b}}_{U} & =-\sum_{i=1}^{n}\frac{\Gamma}{\alpha_{i}}\left[\begin{array}{c}
\left[y_{i}\right]_{\times}\hat{R}^{\top}\\
\hat{R}^{\top}
\end{array}\right]e_{i}\label{eq:SLAM_b_est_dot_f1}\\
W_{U} & =-\sum_{i=1}^{n}\frac{k_{w}}{\alpha_{i}}\left[\begin{array}{c}
\left[y_{i}\right]_{\times}\hat{R}^{\top}\\
\hat{R}^{\top}
\end{array}\right]e_{i}\label{eq:SLAM_W_f1}
\end{align}
with $k_{w}$, $k_{p}$, $\Gamma$, and $\alpha_{i}$ being positive
constants, $e_{i}$ being as given in \eqref{eq:SLAM_e_Final} for
all $i=1,2,\cdots,n$, $W_{U}=\left[W_{\Omega}^{\top},W_{V}^{\top}\right]^{\top}\in\mathbb{R}^{6}$
being a correction factor, and $\hat{b}_{U}=\left[\hat{b}_{\Omega}^{\top},\hat{b}_{V}^{\top}\right]^{\top}\in\mathbb{R}^{6}$
being the estimate of $b_{U}$.
\begin{thm}
	Consider the true motion of SLAM dynamics to be $\dot{X}=\left(\dot{\boldsymbol{T}},\dot{\overline{{\rm p}}}\right)$
	as in \eqref{eq:SLAM_True_dot}, the output to be landmark measurements
	($\overline{y}_{i}=\boldsymbol{T}^{-1}\overline{{\rm p}}_{i}$) for
	all $i=1,2,\ldots,n$ and the velocity measurements in \eqref{eq:SLAM_True_dot}
	to be attached only with constant bias where $U_{m}=U+b_{U}$ and
	$n_{U}=0$. Let Assumption \ref{Assumption:Feature} hold true and
	the deterministic filter be as in \eqref{eq:SLAM_T_est_dot_f1}, \eqref{eq:SLAM_p_est_dot_f1},
	\eqref{eq:SLAM_b_est_dot_f1}, and \eqref{eq:SLAM_W_f1} combined
	with the measurements of $U_{m}$ and $\overline{y}_{i}$. Set the
	design parameters $k_{w}$, $k_{p}$, $\Gamma$, and $\alpha_{i}$
	as positive scalars for all $i=1,2,\ldots,n$. Also, consider the
	set
	\begin{align}
	\mathcal{S}= & \left\{ \left(e_{1},e_{2},\ldots,e_{n}\right)\in\mathbb{R}^{3}\times\mathbb{R}^{3}\times\cdots\times\mathbb{R}^{3}\right|\nonumber \\
	& \hspace{9em}e_{i}=\underline{\mathbf{0}}_{3}\forall i=1,2,\ldots n\}\label{eq:SLAM_Set1}
	\end{align}
	Then 1) the error $e_{i}$ in \eqref{eq:SLAM_e} is exponentially
	regulated to the set $\mathcal{S}$, 2) $\tilde{\boldsymbol{T}}$
	remains bounded and 3) given constants $R_{c}\in\mathbb{SO}\left(3\right)$
	and $P_{c}\in\mathbb{R}^{3}$ one has $\tilde{R}\rightarrow R_{c}$
	and $\tilde{P}\rightarrow P_{c}$ as $t\rightarrow\infty$. 
\end{thm}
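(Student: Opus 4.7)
The plan is to derive closed-loop error dynamics, pick a Lyapunov function that couples the landmark errors with the bias error, exploit the fact that the bias update law is engineered to cancel the resulting cross-terms exactly, and then unpack the three conclusions in turn.

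First I would differentiate $\overset{\circ}{e}_i = \overline{\hat{{\rm p}}}_i - \hat{\boldsymbol{T}}\overline{y}_i$. Using $\dot{\hat{\boldsymbol{T}}} = \hat{\boldsymbol{T}}[U_m - \hat{b}_U - W_U]_\wedge$, the identity $\dot{\overline{y}}_i = -[U]_\wedge \overline{y}_i$ (which follows from $\dot{\boldsymbol T}^{-1} = -[U]_\wedge \boldsymbol T^{-1}$ together with stationary landmarks), and the hypothesis $U_m = U + b_U$, the true velocity $U$ cancels and one obtains
\[
\dot e_i = -k_p e_i + M_i(\tilde b_U - W_U), \qquad M_i = \bigl[\,\hat R\,[y_i]_\times,\; -\hat R\,\bigr] \in \mathbb R^{3\times 6}.
\]
Since $b_U$ is constant, $\dot{\tilde b}_U = -\dot{\hat b}_U = -\sum_i (\Gamma/\alpha_i) M_i^\top e_i$, the sign coming from $([y_i]_\times \hat R^\top)^\top = -\hat R [y_i]_\times$.

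Second, I would take $V = \tfrac12 \sum_i \|e_i\|^2/\alpha_i + (1/(2\Gamma))\|\tilde b_U\|^2$. The cross-term $\sum_i e_i^\top M_i \tilde b_U / \alpha_i$ arising in $\dot V$ from the $\dot e_i$ side is cancelled exactly by $(1/\Gamma)\tilde b_U^\top \dot{\tilde b}_U$—this is the whole point of choosing $\dot{\hat b}_U$ with gains $\Gamma/\alpha_i$ and regressor $M_i^\top$. Substituting the explicit $W_U = \sum_j (k_w/\alpha_j) M_j^\top e_j$ then collapses the derivative to
\[
\dot V = -k_p \sum_i \frac{\|e_i\|^2}{\alpha_i} \;-\; k_w \Bigl\|\sum_i \frac{M_i^\top e_i}{\alpha_i}\Bigr\|^2 \;\le\; 0.
\]
Monotonicity of $V$ bounds $\|\tilde b_U\|$ and every $\|e_i\|$. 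Because $\tilde R \in \mathbb{SO}(3)$ is automatically bounded and $\tilde P = \hat P - \tilde R P$ is bounded once $e_i$ and the trajectory are, claim (2) follows. Integrating the Lyapunov inequality yields $\sum_i \|e_i\|^2 \in L^1$, and since $\dot e_i$ is bounded (every ingredient is), Barbalat's lemma forces $e_i \to 0$. To sharpen this into the exponential rate stated in item (1), I would write $\dot V \le -2 k_p V + (k_p/\Gamma)\|\tilde b_U\|^2$, read the bias term as a vanishing perturbation of the Hurwitz core $\dot e_i = -k_p e_i$, and invoke the comparison principle together with the already-established settling of $\|\tilde b_U\|$.

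Finally, for claim (3), once $e_i \to 0$ the regressor $\sum_i(\Gamma/\alpha_i) M_i^\top e_i$ vanishes, so $\hat b_U$ limits to a constant $\hat b_U^{\infty}$ and $W_U \to 0$. The remaining dynamics $\dot{\tilde R} = \tilde R [R(\tilde b_\Omega - W_\Omega)]_\times$ and $\dot{\tilde P} = \hat R(\tilde b_V - W_V)$ have asymptotically near-constant right-hand sides, and because the filter observes only body-frame landmark measurements without any absolute anchor, the unobservable gauge mode of SLAM absorbs the residual bias, forcing $\tilde R, \tilde P$ to limit to some $R_c \in \mathbb{SO}(3)$ and $P_c \in \mathbb R^3$. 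I expect this last step—pinning down the limits $R_c, P_c$ rather than merely proving boundedness—to be the main obstacle, since it requires an invariance/gauge-freedom argument on top of the otherwise clean Lyapunov calculation that produces claims (1) and (2).
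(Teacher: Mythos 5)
Your proposal follows essentially the same route as the paper's proof: the same error dynamics $\dot e_i=\dot{\hat{{\rm p}}}_i-\left[\begin{smallmatrix}-\hat R[y_i]_\times & \hat R\end{smallmatrix}\right](\tilde b_U-W_U)$, the same Lyapunov function $\boldsymbol{{\rm V}}=\sum_i\|e_i\|^2/(2\alpha_i)+\tfrac12\tilde b_U^\top\Gamma^{-1}\tilde b_U$, the same exact cancellation of the $e_i$--$\tilde b_U$ cross term by the adaptive bias law, and the same resulting derivative (your single term $-k_w\|\sum_i M_i^\top e_i/\alpha_i\|^2$ is precisely the paper's two squared norms combined, using orthogonality of $\hat R$). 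Claims (2) and (3) are treated by you and by the paper at a comparable level of rigor: boundedness from monotonicity of $\boldsymbol{{\rm V}}$, Barbalat, and convergence of the residual unobservable directions to constants.

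The one step where your added machinery would not go through is the sharpening of $e_i\to\underline{\mathbf{0}}_3$ into an exponential rate via a vanishing-perturbation argument. The inequality $\dot{\boldsymbol{{\rm V}}}\le-2k_p\boldsymbol{{\rm V}}+(k_p/\Gamma)\|\tilde b_U\|^2$ is algebraically fine, but $\tilde b_U$ is only shown to settle to a constant that is in general nonzero; hence $(k_p/\Gamma)\|\tilde b_U\|^2$ is not a vanishing perturbation, and the comparison principle yields only $\limsup_{t\to\infty}\boldsymbol{{\rm V}}\le\|\tilde b_U^\infty\|^2/(2\Gamma)$, which is exactly the bias portion of $\boldsymbol{{\rm V}}$ and extracts no decay of $e_i$ at all. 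The paper does not take this detour: it reads the regulation of $e_i$ to $\mathcal S$ directly off the dominant term $-\sum_i(k_p/\alpha_i)\|e_i\|^2$ in $\dot{\boldsymbol{{\rm V}}}$ (the claim concerns $e_i$ only, not decay of the full Lyapunov function including $\tilde b_U$). Your Barbalat/$L^1$ route independently establishes the asymptotic convergence $e_i\to\underline{\mathbf{0}}_3$, so the substance of claim (1) survives; only the specific comparison-principle argument for the rate should be dropped or replaced.
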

\begin{proof}Since $\boldsymbol{\dot{T}}^{-1}=-\boldsymbol{T}^{-1}\boldsymbol{\dot{T}}\boldsymbol{T}^{-1}$,
	one obtains the error dynamics of $\tilde{\boldsymbol{T}}$ defined
	in \eqref{eq:SLAM_T_error} as follows
	\begin{align}
	\dot{\tilde{\boldsymbol{T}}} & =\dot{\hat{\boldsymbol{T}}}\boldsymbol{T}^{-1}+\hat{\boldsymbol{T}}\dot{\boldsymbol{T}}^{-1}\nonumber \\
	& =\hat{\boldsymbol{T}}\left[U+\tilde{b}_{U}-W_{U}\right]_{\wedge}\boldsymbol{T}^{-1}-\hat{\boldsymbol{T}}\left[U\right]_{\wedge}\boldsymbol{T}^{-1}\nonumber \\
	& =\hat{\boldsymbol{T}}\left[\tilde{b}_{U}-W_{U}\right]_{\wedge}\hat{\boldsymbol{T}}^{-1}\tilde{\boldsymbol{T}}\label{eq:SLAM_T_error_dot}
	\end{align}
	Thereby, the error dynamics of $\overset{\circ}{e}_{i}$ in \eqref{eq:SLAM_e}
	are
	\begin{align}
	\overset{\circ}{\dot{e}}_{i} & =\overset{\circ}{\dot{\hat{{\rm p}}}}_{i}-\dot{\tilde{\boldsymbol{T}}}\,\overline{{\rm p}}_{i}-\tilde{\boldsymbol{T}}\,\dot{\overline{{\rm p}}}_{i}\nonumber \\
	& =\overset{\circ}{\dot{\hat{{\rm p}}}}_{i}-\hat{\boldsymbol{T}}\left[\tilde{b}_{U}-W_{U}\right]_{\wedge}\hat{\boldsymbol{T}}^{-1}\tilde{\boldsymbol{T}}\,\overline{{\rm p}}_{i}\label{eq:SLAM_e_dot}
	\end{align}
	From \eqref{eq:SLAM_T_error_dot}, one finds
	\begin{align}
	\hat{\boldsymbol{T}}\left[\tilde{b}_{U}\right]_{\wedge}\hat{\boldsymbol{T}}^{-1} & =\left[\begin{array}{cc}
	\hat{R} & \hat{P}\\
	\underline{\mathbf{0}}_{3}^{\top} & 1
	\end{array}\right]\left[\begin{array}{cc}
	\left[\tilde{b}_{\Omega}\right]_{\times} & \tilde{b}_{V}\\
	\underline{\mathbf{0}} & 0
	\end{array}\right]\left[\begin{array}{cc}
	\hat{R}^{\top} & -\hat{R}^{\top}\hat{P}\\
	\underline{\mathbf{0}}_{3}^{\top} & 1
	\end{array}\right]\nonumber \\
	& =\left[\begin{array}{cc}
	\hat{R}\left[\tilde{b}_{\Omega}\right]_{\times}\hat{R}^{\top} & \hat{R}\tilde{b}_{V}-\hat{R}\left[\tilde{b}_{\Omega}\right]_{\times}\hat{R}^{\top}\hat{P}\\
	\underline{\mathbf{0}}_{3}^{\top} & 0
	\end{array}\right]\nonumber \\
	& =\left[\begin{array}{c}
	\hat{R}\tilde{b}_{\Omega}\\
	\hat{R}\tilde{b}_{V}+\left[\hat{P}\right]_{\times}\hat{R}\tilde{b}_{\Omega}
	\end{array}\right]_{\wedge}\in\mathfrak{se}\left(3\right)\label{eq:SLAM_Adj_Property4}
	\end{align}
	where $\left[R\tilde{b}_{\Omega}\right]_{\times}=R\left[\tilde{b}_{\Omega}\right]_{\times}R^{\top}$
	as defined in \eqref{eq:SLAM_Identity1}. Recalling the definition
	of wedge operator in \eqref{eq:SLAM_wedge}, one finds that \eqref{eq:SLAM_Adj_Property4}
	becomes
	\begin{equation}
	\hat{\boldsymbol{T}}\left[\tilde{b}_{U}\right]_{\wedge}\hat{\boldsymbol{T}}^{-1}=\left[\left[\begin{array}{cc}
	\hat{R} & 0_{3\times3}\\
	\left[\hat{P}\right]_{\times}\hat{R} & \hat{R}
	\end{array}\right]\tilde{b}_{U}\right]_{\wedge}\label{eq:SLAM_Adj_Property5}
	\end{equation}
	According to \eqref{eq:SLAM_Adj_Property5} and \eqref{eq:SLAM_e_dot},
	one has
	\begin{align}
	\hat{\boldsymbol{T}}\left[\tilde{b}_{U}\right]_{\wedge}\hat{\boldsymbol{T}}^{-1}\tilde{\boldsymbol{T}}\,\overline{{\rm p}}_{i} & =\left[\left[\begin{array}{cc}
	\hat{R} & 0_{3\times3}\\
	\left[\hat{P}\right]_{\times}\hat{R} & \hat{R}
	\end{array}\right]\tilde{b}_{U}\right]_{\wedge}\left[\begin{array}{c}
	\hat{R}y_{i}+\hat{P}\\
	1
	\end{array}\right]\nonumber \\
	& =\left[\begin{array}{c}
	-\left[\hat{R}y_{i}\right]_{\times}\hat{R}\tilde{b}_{\Omega}+\hat{R}\tilde{b}_{V}\\
	0
	\end{array}\right]\nonumber \\
	& =\left[\begin{array}{cc}
	-\hat{R}\left[y_{i}\right]_{\times} & \hat{R}\\
	\underline{\mathbf{0}}_{3}^{\top} & \underline{\mathbf{0}}_{3}^{\top}
	\end{array}\right]\tilde{b}_{U}\label{eq:SLAM_Adj_Property6}
	\end{align}
	In view of \eqref{eq:SLAM_e_dot} and \eqref{eq:SLAM_Adj_Property6},
	one can rewrite \eqref{eq:SLAM_e_dot} as
	\begin{align}
	\overset{\circ}{\dot{e}}_{i} & =\overset{\circ}{\dot{\hat{{\rm p}}}}_{i}-\left[\begin{array}{cc}
	-\hat{R}\left[y_{i}\right]_{\times} & \hat{R}\\
	\underline{\mathbf{0}}_{3}^{\top} & \underline{\mathbf{0}}_{3}^{\top}
	\end{array}\right]\left(\tilde{b}_{U}-W_{U}\right)\label{eq:SLAM_e_dot1}
	\end{align}
	The last row in \eqref{eq:SLAM_e_dot1} are zeros, thereby, one obtains
	\begin{align}
	\dot{e}_{i} & =\dot{\hat{{\rm p}}}_{i}-\left[\begin{array}{cc}
	-\hat{R}\left[y_{i}\right]_{\times} & \hat{R}\end{array}\right]\left(\tilde{b}_{U}-W_{U}\right)\label{eq:SLAM_e_dot_Final}
	\end{align}
	Consider the candidate Lyapunov function $\boldsymbol{{\rm V}}=\boldsymbol{{\rm V}}\left(e_{1},\ldots,e_{n},\tilde{b}_{U}\right)$
	\begin{equation}
	\boldsymbol{{\rm V}}=\sum_{i=1}^{n}\frac{1}{2\alpha_{i}}e_{i}^{\top}e_{i}+\frac{1}{2}\tilde{b}_{U}^{\top}\Gamma^{-1}\tilde{b}_{U}\label{eq:SLAM_Lyap1}
	\end{equation}
	The time derivative of \eqref{eq:SLAM_Lyap1} is
	\begin{align}
	\dot{\boldsymbol{{\rm V}}}= & \sum_{i=1}^{n}\frac{1}{\alpha_{i}}e_{i}^{\top}\dot{e}_{i}-\tilde{b}_{U}^{\top}\Gamma^{-1}\dot{\hat{b}}_{U}\nonumber \\
	= & \sum_{i=1}^{n}\frac{1}{\alpha_{i}}e_{i}^{\top}\dot{\hat{{\rm p}}}_{i}-\sum_{i=1}^{n}\frac{1}{\alpha_{i}}e_{i}^{\top}\left[\begin{array}{cc}
	-\hat{R}\left[y_{i}\right]_{\times} & \hat{R}\end{array}\right]\left(\tilde{b}_{U}-W_{U}\right)\nonumber \\
	& -\tilde{b}_{U}^{\top}\Gamma^{-1}\dot{\hat{b}}_{U}\label{eq:SLAM_Lyap1_dot}
	\end{align}
	Replacing $W_{U}$, $\dot{\hat{b}}_{U}$ and $\dot{\hat{{\rm p}}}_{i}$
	with their expressions in \eqref{eq:SLAM_p_est_dot_f1}, \eqref{eq:SLAM_b_est_dot_f1}
	and \eqref{eq:SLAM_W_f1}, respectively, one obtains
	\begin{align}
	\dot{\boldsymbol{{\rm V}}}= & -\sum_{i=1}^{n}\frac{k_{p}}{\alpha_{i}}\left\Vert e_{i}\right\Vert ^{2}-k_{w}\left\Vert \sum_{i=1}^{n}\frac{e_{i}}{\alpha_{i}}\right\Vert ^{2}\nonumber \\
	& -k_{w}\left\Vert \sum_{i=1}^{n}\left[y_{i}\right]_{\times}\hat{R}^{\top}\frac{e_{i}}{\alpha_{i}}\right\Vert ^{2}\label{eq:SLAM_Lyap1_dot_Final}
	\end{align}
	Based on \eqref{eq:SLAM_Lyap1_dot_Final} the time derivative of $\boldsymbol{{\rm V}}$
	is negative definite where $\dot{\boldsymbol{{\rm V}}}$ equals to
	zero at $e_{i}=\underline{\mathbf{0}}_{3}$. The result in \eqref{eq:SLAM_Lyap1_dot_Final}
	affirms that $e_{i}$ is exponentially regulated to the set $\mathcal{S}$
	given in \eqref{eq:SLAM_Set1}. Based on Barbalat Lemma, $\dot{\boldsymbol{{\rm V}}}$
	is negative, continuous and approaches the origin implying that $\tilde{\boldsymbol{T}}$,
	$\tilde{b}_{U}$, and $\ddot{e}_{i}$ stay bounded. Also, the expression
	in \eqref{eq:SLAM_e_tilde} demonstrates that if $e_{i}\rightarrow\underline{\mathbf{0}}_{3}$,
	then $\tilde{{\rm p}}_{i}-\tilde{P}\rightarrow\underline{\mathbf{0}}_{3}$,
	and accordingly by \eqref{eq:SLAM_e} one has $\overline{\hat{{\rm p}}}_{i}-\tilde{\boldsymbol{T}}\,\overline{{\rm p}}_{i}\rightarrow\underline{\mathbf{0}}_{4}$.
	As such, $\tilde{\boldsymbol{T}}$ is upper bounded with $\tilde{R}\rightarrow R_{c}$
	and $\tilde{P}\rightarrow P_{c}$ as $t\rightarrow\infty$ which completes
	the proof.\end{proof}

\subsection{Nonlinear Stochastic Filter Design with IMU\label{subsec:Det_with_IMU}}

The nonlinear deterministic filter design in Subsection \ref{subsec:Det_without_IMU}
allows $e_{i}=\tilde{{\rm p}}_{i}-\tilde{P}\rightarrow\underline{\mathbf{0}}_{3}$
exponentially where $\tilde{P}=\hat{P}-\tilde{R}P$ and $\tilde{{\rm p}}_{i}=\hat{{\rm p}}_{i}-\tilde{R}{\rm p}_{i}$.
However, $\tilde{R}\rightarrow R_{c}$ and $\tilde{P}\rightarrow P_{c}$
as $t\rightarrow\infty$ such that $R_{c}\in\mathbb{SO}\left(3\right)$
and $P_{c}\in\mathbb{R}^{3}$ are constants. Hence, for the case when
$R\left(0\right)$ and $P\left(0\right)$ are not precisely known,
the error in $\tilde{R}$, $\tilde{P}$, and $\tilde{{\rm p}}_{i}$
will become remarkably large, resulting in highly inaccurate pose
and landmark estimates. Additionally, the nonlinear deterministic
filter considers the group velocity vector measurements associated
with the SLAM dynamics in \eqref{eq:SLAM_True_dot_STOCH} to be noise
free $n_{U}=0$. Failing to incorporate the impact of noise may significantly
undermine the effectiveness of the estimation process and destabilize
the overall closed loop dynamics. This behavior is exemplified by
the previously proposed solutions, for example \cite{hashim2020LetterSLAM,zlotnik2018SLAM}.
\begin{rem}
	\label{rem:SLAM-Observability} Define $R_{c}\in\mathbb{SO}\left(3\right)$
	and $P_{c}\in\mathbb{R}^{3}$ as constants. It has been definitively
	proven that SLAM problem is not observable \cite{lee2006SLAM_observability},
	therefore, the best achievable solution is for $\tilde{R}\rightarrow R_{c}$,
	$\tilde{P}\rightarrow P_{c}$, and $\hat{{\rm p}}_{i}\rightarrow\hat{P}+\tilde{R}{\rm p}_{i}-\tilde{R}P$
	as $t\rightarrow\infty$.
\end{rem}
Based on the above discussion, the objective of this subsection is
to propose a nonlinear stochastic filter design for SLAM that is able
to produce good performance through velocity, IMU, and landmark measurements
regardless the initial value of pose and landmarks were accurately
known or not. Considering the body-frame measurements and the associated
normalization in \eqref{eq:SLAM_Vect_R} and \eqref{eq:SLAM_Vector_norm},
define
\begin{equation}
M=M^{\top}=\sum_{j=1}^{n_{{\rm R}}}s_{j}\upsilon_{j}^{r}\left(\upsilon_{j}^{r}\right)^{\top},\hspace{1em}\forall j=1,2,\ldots n_{{\rm R}}\label{eq:SLAM_M}
\end{equation}
with $s_{j}\geq0$ being a constant gain associated with the confidence
level of the $j$th sensor measurements. Notice that $M$ in \eqref{eq:SLAM_M}
is symmetric. Based on Remark \ref{rem:R_Marix}, the availability
of a minimum two non-collinear body-frame measurements along with
their inertial-frame observations is assumed ($n_{{\rm R}}\geq2$)
which can be satisfied by a low-cost IMU module. For $n_{{\rm R}}=2$,
the third measurement and its observations are calculated using cross
product $\upsilon_{3}^{a}=\upsilon_{1}^{a}\times\upsilon_{2}^{a}$
and $\upsilon_{3}^{r}=\upsilon_{1}^{r}\times\upsilon_{2}^{r}$. As
such, ${\rm rank}\left(M\right)=3$. Defining the eigenvalues of $M$
as $\lambda\left(M\right)=\left\{ \lambda_{1},\lambda_{2},\lambda_{3}\right\} $,
one has $\lambda_{1},\lambda_{2},\lambda_{3}>0$. Let $\breve{\mathbf{M}}={\rm Tr}\left\{ M\right\} \mathbf{I}_{3}-M$,
given that ${\rm rank}\left(M\right)=3$. Hence, ${\rm rank}(\breve{\mathbf{M}})=3$
as well allowing to conclude that (\cite{bullo2004geometric} page.
553): 
\begin{enumerate}
	\item $\breve{\mathbf{M}}$ is positive-definite.
	\item $\breve{\mathbf{M}}$ has the following eigenvalues: $\lambda(\breve{\mathbf{M}})=\left\{ \lambda_{1}+\lambda_{2},\lambda_{2}+\lambda_{3},\lambda_{3}+\lambda_{1}\right\} $
	with $\underline{\lambda}(\breve{\mathbf{M}})>0$ being the minimum
	eigenvalue. 
\end{enumerate}
In all of the following discussions it is assumed that ${\rm rank}\left(M\right)=3$.
Additionally, for $j=1,2,\ldots,n_{{\rm R}}$ it is selected that
$\sum_{j=1}^{n_{{\rm R}}}s_{j}=3$ signifying that ${\rm Tr}\left\{ M\right\} =3$.

With the aim of proposing a stochastic filter design reliant on a
set of measurements, let us reintroduce the necessary variables in
vectorial terms. From \eqref{eq:SLAM_Vect_R} and \eqref{eq:SLAM_Vector_norm},
as the true normalized value of the $j$th body-frame vector is $\upsilon_{j}^{a}=R^{\top}\upsilon_{j}^{r}$,
let
\begin{equation}
\hat{\upsilon}_{j}^{a}=\hat{R}^{\top}\upsilon_{j}^{r},\hspace{1em}\forall j=1,2,\ldots n_{{\rm R}}\label{eq:SLAM_vect_R_estimate}
\end{equation}
Define the pose error analogously to \eqref{eq:SLAM_T_error} where
$\tilde{R}=\hat{R}R^{\top}$. Based on the identities in \eqref{eq:SLAM_Identity1}
and \eqref{eq:SLAM_Identity2}, one has
\begin{align*}
\left[\hat{R}\sum_{j=1}^{n_{{\rm R}}}\frac{s_{j}}{2}\hat{\upsilon}_{j}^{a}\times\upsilon_{j}^{a}\right]_{\times} & =\hat{R}\sum_{j=1}^{n_{{\rm R}}}\frac{s_{j}}{2}\left(\upsilon_{j}^{a}\left(\hat{\upsilon}_{j}^{a}\right)^{\top}-\hat{\upsilon}_{j}^{a}\left(\upsilon_{j}^{a}\right)^{\top}\right)\hat{R}^{\top}\\
& =\frac{1}{2}\hat{R}R^{\top}M-\frac{1}{2}MR\hat{R}^{\top}\\
& =\boldsymbol{\mathcal{P}}_{a}(\tilde{R}M)
\end{align*}
This means
\begin{equation}
\boldsymbol{\Upsilon}(\tilde{R}M)=\mathbf{vex}(\boldsymbol{\mathcal{P}}_{a}(\tilde{R}M))=\hat{R}\sum_{j=1}^{n_{{\rm R}}}\left(\frac{s_{j}}{2}\hat{\upsilon}_{j}^{a}\times\upsilon_{j}^{a}\right)\label{eq:SLAM_VEX_VM}
\end{equation}
Accordingly, $\tilde{R}M$ is equivalent to
\begin{equation}
\tilde{R}M=\hat{R}\sum_{j=1}^{n_{{\rm R}}}\left(s_{j}\upsilon_{j}^{a}\left(\upsilon_{j}^{r}\right)^{\top}\right)\label{eq:SLAM_RM_VM}
\end{equation}
Recall that ${\rm Tr}\left\{ M\right\} =3$. From the definition in
\eqref{eq:SLAM_Ecul_Dist}, one has
\begin{align}
E_{\tilde{R}}=||\tilde{R}M||_{{\rm I}} & =\frac{1}{4}{\rm Tr}\left\{ (\mathbf{I}_{3}-\tilde{R})M\right\} \nonumber \\
& =\frac{1}{4}{\rm Tr}\left\{ \mathbf{I}_{3}-\hat{R}\sum_{j=1}^{n_{{\rm R}}}\left(s_{j}\upsilon_{j}^{a}\left(\upsilon_{j}^{r}\right)^{\top}\right)\right\} \nonumber \\
& =\frac{1}{4}\sum_{j=1}^{n_{{\rm R}}}\left(1-s_{j}\left(\hat{\upsilon}_{j}^{a}\right)^{\top}\upsilon_{j}^{a}\right)\label{eq:SLAM_RI_VM}
\end{align}
Also, note that
\begin{align}
1-\left\Vert \tilde{R}\right\Vert _{{\rm I}} & =1-\frac{1}{4}{\rm Tr}\left\{ \mathbf{I}_{3}-\tilde{R}\right\} \nonumber \\
& =1-\frac{3}{4}+\frac{1}{4}{\rm Tr}\{\tilde{R}\}\nonumber \\
& =\frac{1}{4}\left(1+{\rm Tr}\{\tilde{R}\}\right)\label{eq:SLAM_Property}
\end{align}
One may rewrite the above results \eqref{eq:SLAM_Property} as
\begin{align}
1-||\tilde{R}||_{{\rm I}} & =\frac{1}{4}\left(1+{\rm Tr}\{\tilde{R}MM^{-1}\}\right)\label{eq:SLAM_property2}
\end{align}
In view of \eqref{eq:SLAM_M}, \eqref{eq:SLAM_RM_VM} and \eqref{eq:SLAM_property2},
one obtains
\begin{align}
& \pi(\tilde{R},M)={\rm Tr}\left\{ \tilde{R}MM^{-1}\right\} \nonumber \\
& \hspace{0.3em}={\rm Tr}\left\{ \left(\sum_{j=1}^{n_{{\rm R}}}s_{j}\upsilon_{j}^{a}\left(\upsilon_{j}^{r}\right)^{\top}\right)\left(\sum_{j=1}^{n_{{\rm R}}}s_{j}\hat{\upsilon}_{j}^{a}\left(\upsilon_{j}^{r}\right)^{\top}\right)^{-1}\right\} \label{eq:SLAM_Gamma_VM}
\end{align}
To this end, in the filter design it is considered that $\breve{\mathbf{M}}={\rm Tr}\left\{ M\right\} \mathbf{I}_{3}-M$,
$E_{\tilde{R}}=||\tilde{R}M||_{{\rm I}}$, $\pi(\tilde{R},M)$, $\boldsymbol{\Upsilon}(\tilde{R}M)$,
and $e_{i}$ are given relative to vector measurements as in \eqref{eq:SLAM_M},
\eqref{eq:SLAM_RI_VM}, \eqref{eq:SLAM_Gamma_VM}, \eqref{eq:SLAM_VEX_VM},
and \eqref{eq:SLAM_e_Final}, respectively, for all $i=1,2,\cdots,n$.
Consider the following nonlinear stochastic filter:

\begin{align}
\dot{\hat{\boldsymbol{T}}}= & \hat{\boldsymbol{T}}\left[U_{m}-\hat{b}_{U}-W_{U}\right]_{\wedge}\label{eq:SLAM_T_est_dot_f2}\\
\dot{{\rm \hat{p}}}_{i}= & -\frac{k_{2}}{\varrho}e_{i}+\hat{R}\left[y_{i}\right]_{\times}W_{\Omega},\hspace{1em}i=1,2,\ldots,n\label{eq:SLAM_p_est_dot_f2}\\
\dot{\hat{b}}_{U}= & \sum_{i=1}^{n}\frac{\Gamma}{\alpha_{i}}\left[\begin{array}{cc}
\frac{\alpha_{i}}{2}\tau_{b}\hat{R}^{\top} & -\left[y_{i}\right]_{\times}\hat{R}^{\top}\\
0_{3\times3} & -\hat{R}^{\top}
\end{array}\right]\left[\begin{array}{c}
\boldsymbol{\Upsilon}(\tilde{R}M)\\
\left\Vert e_{i}\right\Vert ^{2}e_{i}
\end{array}\right]\nonumber \\
& -k_{b}\Gamma\hat{b}_{U}\label{eq:SLAM_b_est_dot_f2}\\
\dot{\hat{\sigma}}= & \frac{\Gamma_{\sigma}}{8}\tau_{\sigma}{\rm diag}\left(\hat{R}^{\top}\boldsymbol{\Upsilon}(\tilde{R}M)\right)\hat{R}^{\top}\boldsymbol{\Upsilon}(\tilde{R}M)-k_{\sigma}\Gamma_{\sigma}\hat{\sigma}\label{eq:SLAM_s_est_dot_f2}\\
W_{U}= & \sum_{i=1}^{n}\frac{1}{\alpha_{i}}\left[\begin{array}{c}
\alpha_{i}\left(\frac{k_{1}}{\tau_{w}}\mathbf{I}_{3}+\frac{1}{4}\frac{E_{\tilde{R}}+2}{E_{\tilde{R}}+1}{\rm diag}\left(\hat{\sigma}\right)\right)\hat{R}^{\top}\boldsymbol{\Upsilon}(\tilde{R}M)\\
-k_{3}\hat{R}^{\top}\left\Vert e_{i}\right\Vert ^{2}e_{i}
\end{array}\right]\label{eq:SLAM_W_f2}
\end{align}
where $\tau_{b}=\left(E_{\tilde{R}}+1\right)\exp\left(E_{\tilde{R}}\right)$,
$\tau_{\sigma}=\left(E_{\tilde{R}}+2\right)\exp\left(E_{\tilde{R}}\right)$,
$\tau_{w}=\underline{\lambda}(\breve{\mathbf{M}})\left(1+\pi(\tilde{R},M)\right)$,
$W_{U}=\left[W_{\Omega}^{\top},W_{V}^{\top}\right]^{\top}\in\mathbb{R}^{6}$
is a correction factor, and $\hat{b}_{U}=\left[\hat{b}_{\Omega}^{\top},\hat{b}_{V}^{\top}\right]^{\top}\in\mathbb{R}^{6}$
is the estimate of $b_{U}$. $k_{1}$, $k_{2}$, $k_{3}$, $\Gamma_{\sigma}$,
$\Gamma=\left[\begin{array}{cc}
\Gamma_{1} & 0_{3\times3}\\
0_{3\times3} & \Gamma_{2}
\end{array}\right]$, and $\alpha_{i}$ are positive constants.
\begin{thm}
	Consider combining the stochastic SLAM dynamics $\dot{X}=\left(\dot{\boldsymbol{T}},\dot{\overline{{\rm p}}}\right)$
	in \eqref{eq:SLAM_True_dot_STOCH} with landmark measurements (output
	$\overline{y}_{i}=\boldsymbol{T}^{-1}\overline{{\rm p}}_{i}$) for
	all $i=1,2,\ldots,n$, inertial measurement units $\upsilon_{j}^{a}=R^{\top}\upsilon_{j}^{r}$
	for all $j=1,2,\ldots n_{{\rm R}}$, and velocity measurements ($U_{m}=U+b_{U}+n_{U}$)
	where $n_{U}\neq0$. Let Assumptions \ref{Assumption:Feature} and
	\ref{Assum:Boundedness} hold, and let the filter design be as in
	\eqref{eq:SLAM_T_est_dot_f2}, \eqref{eq:SLAM_p_est_dot_f2}, \eqref{eq:SLAM_b_est_dot_f2},
	\eqref{eq:SLAM_s_est_dot_f2}, and \eqref{eq:SLAM_W_f2}. Consider
	the design parameters $k_{2}>9/4$, $k_{b}$, $k_{\sigma}$, $k_{1}$,
	$k_{3}$, $\Gamma$, $\Gamma_{\sigma}$, and $\alpha_{i}$ to be positive
	constants and $\varrho$ to be sufficiently small. Consider the following
	set:
	\begin{align}
	\mathcal{S}= & \left\{ \left(\tilde{R},e_{1},e_{2},\ldots,e_{n}\right)\in\mathbb{SO}\left(3\right)\times\mathbb{R}^{3}\times\mathbb{R}^{3}\times\cdots\times\mathbb{R}^{3}\right|\nonumber \\
	& \hspace{9em}\tilde{R}=\mathbf{I}_{3},e_{i}=\underline{\mathbf{0}}_{3}\forall i=1,2,\ldots n\}\label{eq:SLAM_Set2}
	\end{align}
	Then, 1) all the closed loop error signals are SGUUB in mean square,
	and 2) the error $\left(\tilde{R},e_{1},e_{2},\ldots,e_{n}\right)$
	converges to the close neighborhood of $\mathcal{S}$ in probability
	for $\tilde{R}\left(0\right)\notin\mathcal{U}_{s}$.
\end{thm}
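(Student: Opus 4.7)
The plan is to extend the Lyapunov argument of the preceding theorem into an Itô-calculus analysis on $\mathbb{SLAM}_{n}\left(3\right)$ and then invoke Lemma \ref{Lemm:SLAM_deng}. First, combining the true stochastic dynamics \eqref{eq:SLAM_True_dot_STOCH} with the estimator \eqref{eq:SLAM_T_est_dot_f2} and reusing the conjugation calculation \eqref{eq:SLAM_T_error_dot}--\eqref{eq:SLAM_Adj_Property6} I would obtain
\[
d\tilde{\boldsymbol{T}}=\hat{\boldsymbol{T}}\left[\tilde{b}_{U}-W_{U}\right]_{\wedge}\hat{\boldsymbol{T}}^{-1}\tilde{\boldsymbol{T}}\,dt+\hat{\boldsymbol{T}}\left[\mathcal{Q}_{U}d\beta_{U}\right]_{\wedge}\hat{\boldsymbol{T}}^{-1}\tilde{\boldsymbol{T}},
\]
so that $d\tilde{R}$ inherits an additive diffusion proportional to $\mathcal{Q}_{\Omega}d\beta_{\Omega}$ and the scalar attitude cost $E_{\tilde{R}}=\|\tilde{R}M\|_{{\rm I}}$ picks up an Itô correction depending on the unknown covariance $\sigma$. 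Substituting \eqref{eq:SLAM_p_est_dot_f2} into $d\overset{\circ}{e}_{i}=d\overline{\hat{{\rm p}}}_{i}-d\tilde{\boldsymbol{T}}\,\overline{{\rm p}}_{i}$ and dropping the last (zero) row recovers the stochastic differential equation \eqref{eq:SLAM_e_dot_Stochastic}, with $\mathcal{G}_{i}$ an explicit function of $\hat{R}$ and $[y_{i}]_{\times}$ obtained from the same block pattern as in \eqref{eq:SLAM_Adj_Property6}.

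The factors $\tau_{b}=(E_{\tilde{R}}+1)\exp(E_{\tilde{R}})$ and $\tau_{\sigma}=(E_{\tilde{R}}+2)\exp(E_{\tilde{R}})$ appearing in \eqref{eq:SLAM_b_est_dot_f2} and \eqref{eq:SLAM_s_est_dot_f2} are precisely the first and second $E_{\tilde{R}}$-derivatives of $E_{\tilde{R}}\exp(E_{\tilde{R}})$, which motivates the composite candidate
\[
\boldsymbol{{\rm V}}=E_{\tilde{R}}\exp(E_{\tilde{R}})+\sum_{i=1}^{n}\frac{1}{4\alpha_{i}}\|e_{i}\|^{4}+\tfrac{1}{2}\tilde{b}_{U}^{\top}\Gamma^{-1}\tilde{b}_{U}+\tfrac{1}{2}\tilde{\sigma}^{\top}\Gamma_{\sigma}^{-1}\tilde{\sigma}.
\]
This $\mathcal{C}^{2}$ function satisfies the class-$\mathcal{K}_{\infty}$ sandwich \eqref{eq:SLAM_Vfunction_Lyap} on any compact subset of $\{\tilde{R}\notin\mathcal{U}_{s}\}$, and the quartic choice $\|e_{i}\|^{4}/4$ produces $\boldsymbol{{\rm V}}_{e_{i}}=\|e_{i}\|^{2}e_{i}/\alpha_{i}$, which is exactly the structure that drives the innovation terms in \eqref{eq:SLAM_p_est_dot_f2}, \eqref{eq:SLAM_b_est_dot_f2} and \eqref{eq:SLAM_W_f2}.

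Applying the differential operator of Definition \ref{def:SLAM_LV}, the drift part of $\mathcal{L}\boldsymbol{{\rm V}}$ decomposes, via identities \eqref{eq:SLAM_Identity1}--\eqref{eq:SLAM_Identity6} and the reduction \eqref{eq:SLAM_VEX_VM}, into a rotational term of the form $\tau_{b}\boldsymbol{\Upsilon}(\tilde{R}M)^{\top}\hat{R}(\tilde{b}_{\Omega}-W_{\Omega})$ plus landmark terms $\sum_{i}(\|e_{i}\|^{2}/\alpha_{i})e_{i}^{\top}\mathcal{F}_{i}$. Substituting the update laws, the $\tilde{b}_{U}$-linear drift is cancelled exactly by the first block of $\dot{\hat{b}}_{U}$, the rotational Itô correction proportional to $\sigma$ is cancelled up to a $\tilde{\sigma}$-quadratic by $\dot{\hat{\sigma}}$, and the principal dissipation becomes
\[
-\frac{k_{1}\,\|\boldsymbol{\Upsilon}(\tilde{R}M)\|^{2}\exp(E_{\tilde{R}})}{\underline{\lambda}(\breve{\mathbf{M}})\bigl(1+\pi(\tilde{R},M)\bigr)}-\sum_{i=1}^{n}\frac{k_{2}}{\varrho\alpha_{i}}\|e_{i}\|^{4},
\]
which, by Lemma \ref{Lemm:SLAM_RM_I2}, majorises $-2k_{1}E_{\tilde{R}}\exp(E_{\tilde{R}})-\sum_{i}k_{2}\|e_{i}\|^{4}/(\varrho\alpha_{i})$. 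The cross terms $e_{i}^{\top}\hat{R}[y_{i}]_{\times}W_{\Omega}$ and $e_{i}^{\top}\hat{R}\tilde{b}_{V}$ produced by the coupling in \eqref{eq:SLAM_p_est_dot_f2} are absorbed via Young's inequality (Lemma \ref{lem:SLAM_Young}), which is where $k_{2}>9/4$ and the small parameter $\varrho$ are consumed. The leakage $-k_{b}\Gamma\hat{b}_{U}$ and $-k_{\sigma}\Gamma_{\sigma}\hat{\sigma}$ contribute $-\tfrac{k_{b}}{2}\|\tilde{b}_{U}\|^{2}+\tfrac{k_{b}}{2}\|b_{U}\|^{2}$ and the analogous estimate for $\tilde{\sigma}$, both bounded by Assumption \ref{Assum:Boundedness}. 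Collecting yields $\mathcal{L}\boldsymbol{{\rm V}}\leq-\eta_{1}\boldsymbol{{\rm V}}+\eta_{2}$, and Lemma \ref{Lemm:SLAM_deng} then delivers SGUUB of all closed-loop error signals in mean square together with convergence in probability of $(\tilde{R},e_{1},\ldots,e_{n})$ to a neighbourhood of $\mathcal{S}$.

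The hardest step will be the second-order Itô correction: both $\tfrac{1}{2}{\rm Tr}\{\mathcal{G}_{i}\mathcal{Q}_{U}^{2}\mathcal{G}_{i}^{\top}\boldsymbol{{\rm V}}_{e_{i}e_{i}}\}$ and the analogous quantity generated when $E_{\tilde{R}}\exp(E_{\tilde{R}})$ is differentiated twice against the rotational diffusion depend on the unknown $\sigma$, so the $\hat{\sigma}$-adaptation must be tuned so that the residual $\sigma$-proportional drift telescopes with $-\tilde{\sigma}^{\top}\Gamma_{\sigma}^{-1}\dot{\hat{\sigma}}$ into a non-positive quadratic in $\tilde{\sigma}$ plus a constant. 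This is precisely why the weighting $(E_{\tilde{R}}+2)\exp(E_{\tilde{R}})/8$ and the direction ${\rm diag}(\hat{R}^{\top}\boldsymbol{\Upsilon}(\tilde{R}M))\hat{R}^{\top}\boldsymbol{\Upsilon}(\tilde{R}M)$ appear in \eqref{eq:SLAM_s_est_dot_f2}, and any mismatch in these constants would prevent the bound from closing. Finally, the hypothesis $\tilde{R}(0)\notin\mathcal{U}_{s}$ ensures $E_{\tilde{R}}(0)<1$ so that $\boldsymbol{{\rm V}}(0)$ is finite, and the forward-invariance property of $\{\tilde{R}\notin\mathcal{U}_{s}\}$ noted in Definition \ref{def:Unstable-set} keeps the trajectory away from this repellor, making \eqref{eq:SLAM_EVfunction_Lyap} non-vacuous on the whole time axis.
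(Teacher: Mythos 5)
Your proposal follows essentially the same route as the paper: the identical Lyapunov candidate $\boldsymbol{{\rm V}}=\sum_{i}\|e_{i}\|^{4}/(4\alpha_{i})+E_{\tilde{R}}\exp(E_{\tilde{R}})+\tfrac{1}{2}\tilde{b}_{U}^{\top}\Gamma^{-1}\tilde{b}_{U}+\tfrac{1}{2}\tilde{\sigma}^{\top}\Gamma_{\sigma}^{-1}\tilde{\sigma}$, the same observation that $\tau_{b}$ and $\tau_{\sigma}$ are its first and second $E_{\tilde{R}}$-derivatives, the same cancellations by the adaptation laws, Lemma \ref{Lemm:SLAM_RM_I2} for the attitude dissipation, Young's inequality for the It\^{o} correction, and Lemma \ref{Lemm:SLAM_deng} to conclude SGUUB in mean square. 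The only slip is bookkeeping: the threshold $k_{2}\alpha_{i}>9/4$ comes from splitting the second-order trace term $\tfrac{1}{2}{\rm Tr}\{\mathcal{G}_{i}\mathcal{Q}_{U}^{2}\mathcal{G}_{i}^{\top}(\|e_{i}\|^{2}\mathbf{I}_{3}+2e_{i}e_{i}^{\top})\}\leq\frac{9}{4\alpha_{i}^{2}\varrho}\|e_{i}\|^{4}+c_{2}$ via Young's inequality, whereas the cross term $e_{i}^{\top}\hat{R}[y_{i}]_{\times}W_{\Omega}$ you propose to absorb by Young's inequality is in fact cancelled exactly by the feedforward term in \eqref{eq:SLAM_p_est_dot_f2}.
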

\begin{proof}Due to the fact that $\dot{\hat{\boldsymbol{T}}}$ in
	\eqref{eq:SLAM_T_est_dot_f2} is identical to \eqref{eq:SLAM_T_est_dot_f1},
	and in view of the pose error dynamics in \eqref{eq:SLAM_T_error_dot}
	one has
	\begin{align}
	de_{i}= & \left(\dot{\hat{{\rm p}}}_{i}-\left[\begin{array}{cc}
	-\hat{R}\left[y_{i}\right]_{\times} & \hat{R}\end{array}\right]\left(\tilde{b}_{U}-W_{U}\right)\right)dt\nonumber \\
	& -\left[\begin{array}{cc}
	-\hat{R}\left[y_{i}\right]_{\times} & \hat{R}\end{array}\right]\mathcal{Q}_{U}d\beta_{U}\nonumber \\
	= & \mathcal{F}_{i}dt+\mathcal{G}_{i}\mathcal{Q}_{U}d\beta_{U}\label{eq:SLAM_Attit_Error_dot-1}
	\end{align}
	where $\mathcal{F}_{i}=\dot{\hat{{\rm p}}}_{i}-\left[\begin{array}{cc}
	-\hat{R}\left[y_{i}\right]_{\times} & \hat{R}\end{array}\right]\left(\tilde{b}_{U}-W_{U}\right)$ and $\mathcal{G}_{i}=-\left[\begin{array}{cc}
	-\hat{R}\left[y_{i}\right]_{\times} & \hat{R}\end{array}\right]$. Also, the attitude error dynamics are
	\begin{align}
	\dot{\tilde{R}} & =\dot{\hat{R}}R^{\top}+\hat{R}\dot{R}^{\top}\nonumber \\
	& =\hat{R}\left[\tilde{b}_{\Omega}-W_{\Omega}\right]_{\times}R^{\top}+\hat{R}\left[\mathcal{Q}_{\Omega}\frac{d\beta_{\Omega}}{dt}\right]_{\times}R^{\top}\nonumber \\
	d\tilde{R} & =\left[\hat{R}(\tilde{b}_{\Omega}-W_{\Omega})\right]_{\times}\tilde{R}dt+\left[\hat{R}\mathcal{Q}_{\Omega}d\beta_{\Omega}\right]_{\times}\tilde{R}\label{eq:SLAM_Attit_Error_dot}
	\end{align}
	Recall the definition in \eqref{eq:SLAM_Ecul_Dist} where $E_{\tilde{R}}=||\tilde{R}M||_{{\rm I}}=\frac{1}{4}{\rm Tr}\left\{ (\mathbf{I}_{3}-\tilde{R})M\right\} $.
	Thereby, after considering the identity in \eqref{eq:SLAM_Identity6}
	one finds
	\begin{align}
	dE_{\tilde{R}}= & -\frac{1}{4}{\rm Tr}\left\{ \left[\hat{R}(\tilde{b}_{\Omega}-W_{\Omega})\right]_{\times}\tilde{R}M\right\} dt\nonumber \\
	& -\frac{1}{4}{\rm Tr}\left\{ \left[\hat{R}\mathcal{Q}_{\Omega}d\beta_{\Omega}\right]_{\times}\tilde{R}M\right\} \nonumber \\
	= & -\frac{1}{4}{\rm Tr}\left\{ \tilde{R}M\boldsymbol{\mathcal{P}}_{a}\left(\left[\hat{R}(\tilde{b}_{\Omega}-W_{\Omega})\right]_{\times}\right)\right\} dt\nonumber \\
	& -\frac{1}{4}{\rm Tr}\left\{ \tilde{R}M\boldsymbol{\mathcal{P}}_{a}\left(\left[\hat{R}\mathcal{Q}_{\Omega}d\beta_{\Omega}\right]_{\times}\right)\right\} \nonumber \\
	= & \frac{1}{2}\mathbf{vex}\left(\boldsymbol{\mathcal{P}}_{a}(\tilde{R}M)\right)^{\top}\hat{R}(\tilde{b}_{\Omega}-W_{\Omega})dt\nonumber \\
	& +\frac{1}{2}\mathbf{vex}\left(\boldsymbol{\mathcal{P}}_{a}(\tilde{R}M)\right)^{\top}\hat{R}\mathcal{Q}_{\Omega}d\beta_{\Omega}\nonumber \\
	= & fdt+g\mathcal{Q}_{\Omega}d\beta_{\Omega}\label{eq:SLAM_RI_VM_dot}
	\end{align}
	where $f=\frac{1}{2}\mathbf{vex}\left(\boldsymbol{\mathcal{P}}_{a}(\tilde{R}M)\right)^{\top}\hat{R}(\tilde{b}_{\Omega}-W_{\Omega})$
	and $g=\frac{1}{2}\mathbf{vex}\left(\boldsymbol{\mathcal{P}}_{a}(\tilde{R}M)\right)^{\top}\hat{R}$.
	It should be noted that $\dot{M}=0_{3\times3}$ according to the definition
	in \eqref{eq:SLAM_M}. Since $E_{\tilde{R}}=||\tilde{R}M||_{{\rm I}}$
	is greater than zero for all $||\tilde{R}M||_{{\rm I}}\neq0$ or equivalently
	$\tilde{R}\neq\mathbf{I}_{3}$ and $E_{\tilde{R}}=0$ only at $\tilde{R}=\mathbf{I}_{3}$,
	consider the candidate Lyapunov function $\boldsymbol{{\rm V}}=\boldsymbol{{\rm V}}\left(E_{\tilde{R}},e_{1},\ldots,e_{n},\tilde{b}_{U},\tilde{\sigma}\right)$\begin{strip}
		\begin{equation}
		\boldsymbol{{\rm V}}=\sum_{i=1}^{n}\frac{1}{4\alpha_{i}}\left\Vert e_{i}\right\Vert ^{4}+E_{\tilde{R}}\exp\left(E_{\tilde{R}}\right)+\frac{1}{2}\tilde{b}_{U}^{\top}\Gamma^{-1}\tilde{b}_{U}+\frac{1}{2}\tilde{\sigma}^{\top}\Gamma_{\sigma}^{-1}\tilde{\sigma}\label{eq:SLAM_Lyap2}
		\end{equation}
		The first and the second partial derivatives of \eqref{eq:SLAM_Lyap2}
		with respect to $E_{\tilde{R}}$ and $e_{i}$ are
		\begin{equation}
		\begin{cases}
		\boldsymbol{{\rm V}}_{E_{\tilde{R}}}=\partial\boldsymbol{{\rm V}}/\partial E_{\tilde{R}} & =\left(E_{\tilde{R}}+1\right)\exp\left(E_{\tilde{R}}\right)\\
		\boldsymbol{{\rm V}}_{E_{\tilde{R}}E_{\tilde{R}}}=\partial^{2}\boldsymbol{{\rm V}}/\partial E_{\tilde{R}}^{2} & =\left(E_{\tilde{R}}+2\right)\exp\left(E_{\tilde{R}}\right)
		\end{cases}\label{eq:SLAM_Vv_Vvv_RMI}
		\end{equation}
		\begin{equation}
		\begin{cases}
		\boldsymbol{{\rm V}}_{e_{i}}= & \partial\boldsymbol{{\rm V}}/\partial e_{i}=\left\Vert e_{i}\right\Vert ^{2}e_{i}\\
		\boldsymbol{{\rm V}}_{e_{i}e_{i}}= & \partial^{2}\boldsymbol{{\rm V}}/\partial e_{i}^{2}=\left\Vert e_{i}\right\Vert ^{2}\mathbf{I}_{3}+2e_{i}e_{i}^{\top}
		\end{cases}\label{eq:SLAM_Vv_Vvv_ei}
		\end{equation}
		From \eqref{eq:SLAM_Vv_Vvv_RMI} and \eqref{eq:SLAM_Vv_Vvv_ei}, the
		differential operator $\mathcal{L}\boldsymbol{{\rm V}}$ in Definition
		\ref{def:SLAM_LV} can be expressed as{\small{} }
		\begin{align}
		\mathcal{L}\boldsymbol{{\rm V}}= & \sum_{i=1}^{n}\left(\boldsymbol{{\rm V}}_{e_{i}}^{\top}\mathcal{F}_{i}+\frac{1}{2}{\rm Tr}\left\{ \mathcal{G}_{i}\mathcal{Q}_{U}^{2}\mathcal{G}_{i}^{\top}\boldsymbol{{\rm V}}_{e_{i}e_{i}}\right\} \right)+\boldsymbol{{\rm V}}_{E_{\tilde{R}}}^{\top}f\nonumber \\
		& +\frac{1}{2}{\rm Tr}\left\{ g\mathcal{Q}_{\Omega}^{2}g^{\top}\boldsymbol{{\rm V}}_{E_{\tilde{R}}E_{\tilde{R}}}\right\} -\tilde{b}_{U}^{\top}\Gamma^{-1}\dot{\hat{b}}_{U}-\tilde{\sigma}^{\top}\Gamma_{\sigma}^{-1}\dot{\hat{\sigma}}\nonumber \\
		= & \sum_{i=1}^{n}\frac{1}{\alpha_{i}}\left\Vert e_{i}\right\Vert ^{2}e_{i}^{\top}\left(\dot{\hat{{\rm p}}}_{i}-\left[\begin{array}{cc}
		-\hat{R}\left[y_{i}\right]_{\times} & \hat{R}\end{array}\right]\left(\tilde{b}_{U}-W_{U}\right)\right)\nonumber \\
		& +\frac{1}{2}{\rm Tr}\left\{ \sum_{i=1}^{n}\frac{1}{\alpha_{i}^{2}}\left[\begin{array}{cc}
		-\hat{R}\left[y_{i}\right]_{\times} & \hat{R}\end{array}\right]\mathcal{Q}_{U}^{2}\left[\begin{array}{cc}
		-\hat{R}\left[y_{i}\right]_{\times} & \hat{R}\end{array}\right]^{\top}\left(\left\Vert e_{i}\right\Vert ^{2}\mathbf{I}_{3}+2e_{i}e_{i}^{\top}\right)\right\} \nonumber \\
		& +\frac{1}{2}\tau_{b}\left[\begin{array}{cc}
		\boldsymbol{\Upsilon}(\tilde{R}M)^{\top}\hat{R} & \underline{\mathbf{0}}_{3}^{\top}\end{array}\right]\left(\tilde{b}_{U}-W_{U}\right)+\frac{1}{8}\tau_{\sigma}\boldsymbol{\Upsilon}(\tilde{R}M)^{\top}\hat{R}\mathcal{Q}_{\Omega}^{2}\hat{R}^{\top}\boldsymbol{\Upsilon}(\tilde{R}M)\nonumber \\
		& -\tilde{b}_{U}^{\top}\Gamma^{-1}\dot{\hat{b}}_{U}-\tilde{\sigma}\Gamma_{\sigma}^{-1}\dot{\hat{\sigma}}\label{eq:SLAM_Lyap2_dot2}
		\end{align}
		where $\tau_{b}=\left(E_{\tilde{R}}+1\right)\exp\left(E_{\tilde{R}}\right)$
		and $\tau_{\sigma}=\left(E_{\tilde{R}}+2\right)\exp\left(E_{\tilde{R}}\right)$
		with both $\tau_{b}$ and $\tau_{\sigma}$ being positive for all
		$t\geq0$. One can easily find that
		\begin{align*}
		{\rm Tr}\left\{ \left\Vert e_{i}\right\Vert ^{2}\mathbf{I}_{3}+2e_{i}e_{i}^{\top}\right\}  & \leq{\rm Tr}\left\{ 3\left\Vert e_{i}\right\Vert ^{2}\mathbf{I}_{3}\right\} \\
		& \leq9\left\Vert e_{i}\right\Vert ^{2}
		\end{align*}
		Based on the result above and by the virtue of Young's inequality
		in Lemma \ref{lem:SLAM_Young}, one finds
		\begin{align}
		& \frac{9}{2}{\rm Tr}\left\{ \sum_{i=1}^{n}\frac{1}{\alpha_{i}^{2}}\left[\begin{array}{cc}
		-\hat{R}\left[y_{i}\right]_{\times} & \hat{R}\end{array}\right]\mathcal{Q}_{U}^{2}\left[\begin{array}{cc}
		-\hat{R}\left[y_{i}\right]_{\times} & \hat{R}\end{array}\right]^{\top}\right\} \left\Vert e_{i}\right\Vert ^{2}\leq\nonumber \\
		& \hspace{5em}\leq\sum_{i=1}^{n}\frac{9}{4\alpha_{i}^{2}\varrho}\left\Vert e_{i}\right\Vert ^{4}+\sum_{i=1}^{n}\frac{9\varrho}{4\alpha_{i}^{2}}{\rm Tr}\left\{ \hat{R}\left(\left[y_{i}\right]_{\times}^{\top}{\rm diag}\left(\sigma\right)\left[y_{i}\right]_{\times}+{\rm diag}\left(\sigma\right)\right)\hat{R}^{\top}\right\} ^{2}\label{eq:SLAM_property9}
		\end{align}
		Note that $\sigma$ is the upper bound of $\mathcal{Q}_{U}^{2}$.
		Due to the fact that $\hat{R}\in\mathbb{SO}\left(3\right)$ which
		is orthogonal, for any $M\in\mathbb{R}^{3}$ one has ${\rm Tr}\left\{ \hat{R}M\hat{R}^{\top}\right\} ={\rm Tr}\left\{ M\right\} $.
		Also, let $\mathbf{y_{i}}$ be the upper bound of $y_{i}$. As such,
		define
		\begin{equation}
		c_{2}=\sum_{i=1}^{n}\frac{9\varrho}{4\alpha_{i}^{2}}{\rm Tr}\left\{ -\left[\mathbf{y_{i}}\right]_{\times}{\rm diag}\left(\sigma\right)\left[\mathbf{y_{i}}\right]_{\times}+{\rm diag}\left(\sigma\right)\right\} ^{2}\label{eq:SLAM_Property10}
		\end{equation}
		From \eqref{eq:SLAM_property9} and \eqref{eq:SLAM_Property10}, one
		may express \eqref{eq:SLAM_Lyap2_dot2} in inequality form as
		\begin{align}
		\mathcal{L}\boldsymbol{{\rm V}}\leq & \sum_{i=1}^{n}\frac{1}{\alpha_{i}}\left\Vert e_{i}\right\Vert ^{2}e_{i}^{\top}\dot{\hat{{\rm p}}}_{i}+\sum_{i=1}^{n}\frac{1}{\alpha_{i}}\left[\begin{array}{c}
		\boldsymbol{\Upsilon}(\tilde{R}M)\\
		\left\Vert e_{i}\right\Vert ^{2}e_{i}
		\end{array}\right]^{\top}\left[\begin{array}{c|c}
		\frac{\alpha_{i}}{2}\tau_{b}\hat{R} & 0_{3\times3}\\
		\hline \hat{R}\left[y_{i}\right]_{\times} & -\hat{R}
		\end{array}\right]\left(\tilde{b}_{U}-W_{U}\right)\nonumber \\
		& +\frac{1}{8}\tau_{\sigma}\boldsymbol{\Upsilon}\left(\tilde{R}M\right)^{\top}\hat{R}\left[\hat{R}^{\top}\boldsymbol{\Upsilon}\left(\tilde{R}M\right)\right]_{{\rm D}}\sigma+\sum_{i=1}^{n}\frac{9}{4\alpha_{i}^{2}\varrho}\left\Vert e_{i}\right\Vert ^{4}+c_{2}\nonumber \\
		& -\tilde{b}_{U}^{\top}\Gamma_{b}^{-1}\dot{\hat{b}}_{U}-\tilde{\sigma}^{\top}\Gamma_{\sigma}^{-1}\dot{\hat{\sigma}}\label{eq:SLAM_Lyap2_dot3}
		\end{align}
		Directly substituting $W_{U}$, $\dot{\hat{b}}_{U}$, $\dot{\hat{\sigma}}$
		and $\dot{\hat{{\rm p}}}_{i}$ with the definitions in \eqref{eq:SLAM_p_est_dot_f2},
		\eqref{eq:SLAM_b_est_dot_f2}, \eqref{eq:SLAM_s_est_dot_f2}, and
		\eqref{eq:SLAM_W_f2}, respectively, one shows
		\begin{align}
		\mathcal{L}\boldsymbol{{\rm V}}\leq & -\sum_{i=1}^{n}\frac{k_{2}\alpha_{i}-9/4}{\varrho\alpha_{i}^{2}}\left\Vert e_{i}\right\Vert ^{4}-k_{3}\left\Vert \sum_{i=1}^{n}\frac{\left\Vert e_{i}\right\Vert ^{2}}{\alpha_{i}}\hat{R}e_{i}\right\Vert ^{2}-\frac{k_{1}\tau_{b}}{2\tau_{w}}\left\Vert \boldsymbol{\Upsilon}\left(\tilde{R}M\right)\right\Vert ^{2}+k_{b}\tilde{b}_{U}^{\top}\hat{b}_{U}\nonumber \\
		& +k_{\sigma}\tilde{\sigma}^{\top}\hat{\sigma}+c_{2}\label{eq:SLAM_Lyap2_dot4}
		\end{align}
		As a result of \eqref{eq:SLAM_lemm1_2} in Lemma \ref{Lemm:SLAM_RM_I2},
		one has
		\begin{equation}
		\frac{2}{\tau_{w}}||\boldsymbol{\Upsilon}\left(\tilde{R}M\right)||^{2}=\frac{2}{\tau_{w}}\left\Vert \boldsymbol{\Upsilon}\left(\tilde{R}M\right)\right\Vert ^{2}\geq\left\Vert \tilde{R}M\right\Vert _{{\rm I}}\label{eq:SLAM_property11}
		\end{equation}
		where $E_{\tilde{R}}=||\tilde{R}M||_{{\rm I}}$ and $\tau_{w}=\underline{\lambda}(\breve{\mathbf{M}})\left(1+\pi(\tilde{R},M)\right)=\underline{\lambda}(\breve{\mathbf{M}})\left(1+{\rm Tr}\left\{ \tilde{R}MM^{-1}\right\} \right)$.
		Also
		\begin{align*}
		k_{b}\tilde{b}_{U}^{\top}b_{U} & \leq\frac{k_{b}}{2}\left\Vert b_{U}\right\Vert ^{2}+\frac{k_{b}}{2}\left\Vert \tilde{b}_{U}\right\Vert ^{2}\\
		k_{\sigma}\tilde{\sigma}^{\top}\sigma & \leq\frac{k_{\sigma}}{2}\left\Vert \sigma\right\Vert ^{2}+\frac{k_{\sigma}}{2}\left\Vert \tilde{\sigma}\right\Vert ^{2}
		\end{align*}
		According to the above result and \eqref{eq:SLAM_property11}, one
		can rewrite \eqref{eq:SLAM_Lyap2_dot4} as
		\begin{align}
		\mathcal{L}\boldsymbol{{\rm V}}\leq & -\sum_{i=1}^{n}\frac{k_{2}\alpha_{i}-9/4}{\varrho\alpha_{i}^{2}}\left\Vert e_{i}\right\Vert ^{4}-k_{3}\left\Vert \sum_{i=1}^{n}\frac{\left\Vert e_{i}\right\Vert ^{2}}{\alpha_{i}}\hat{R}e_{i}\right\Vert ^{2}-\frac{k_{1}}{4}\left(E_{\tilde{R}}+1\right)\exp\left(E_{\tilde{R}}\right)E_{\tilde{R}}\nonumber \\
		& -\frac{k_{b}}{2}\left\Vert \tilde{b}_{U}\right\Vert ^{2}-\frac{k_{\sigma}}{2}\left\Vert \tilde{\sigma}\right\Vert ^{2}+\frac{k_{b}}{2}\left\Vert b_{U}\right\Vert ^{2}+\frac{k_{\sigma}}{2}\left\Vert \sigma\right\Vert ^{2}+c_{2}\label{eq:SLAM_Lyap2_final1}
		\end{align}
		Define
		\begin{align*}
		\tau_{i}= & \frac{k_{2}\alpha_{i}-9/4}{\varrho\alpha_{i}^{2}},\hspace{1em}\forall i=1,2,\ldots,n\\
		\eta_{2}= & \frac{k_{b}}{2}\left\Vert b_{U}\right\Vert ^{2}+\frac{k_{\sigma}}{2}\left\Vert \sigma\right\Vert ^{2}+c_{2}
		\end{align*}
		Also, let
		\begin{align*}
		\mathcal{H} & =\left[\begin{array}{c|c}
		\begin{array}{ccc}
		\tau_{1}\mathbf{I}_{4} & \cdots & 0_{4\times4}\\
		\vdots & \ddots & \vdots\\
		0_{4\times4} & \cdots & \tau_{n}\mathbf{I}_{4}
		\end{array} & 0_{4n\times10}\\
		\hline 0_{10\times4n} & \begin{array}{ccc}
		\frac{k_{1}}{4} & 0_{1\times6} & 0_{1\times3}\\
		0_{6\times1} & \frac{1}{2}k_{b}\Gamma & 0_{6\times3}\\
		0_{3\times1} & 0_{3\times6} & \frac{1}{2}k_{\sigma}\Gamma_{\sigma}
		\end{array}
		\end{array}\right]\\
		\tilde{Y} & =\left[\begin{array}{c|c}
		\begin{array}{ccc}
		\frac{1}{4\alpha_{1}}\mathbf{I}_{4} & \cdots & 0_{4\times4}\\
		\vdots & \ddots & \vdots\\
		0_{4\times4} & \cdots & \frac{1}{4\alpha_{n}}\mathbf{I}_{4}
		\end{array} & 0_{4n\times10}\\
		\hline 0_{10\times4n} & \begin{array}{ccc}
		1 & 0_{1\times6} & 0_{1\times3}\\
		0_{6\times1} & \frac{1}{2}\Gamma^{-1} & 0_{6\times3}\\
		0_{3\times1} & 0_{3\times6} & \frac{1}{2}\Gamma_{\sigma}^{-1}
		\end{array}
		\end{array}\right]^{1/2}\left[\begin{array}{c}
		\frac{\left\Vert e_{1}\right\Vert }{2\sqrt{\alpha_{1}}}e_{1}\\
		\vdots\\
		\frac{\left\Vert e_{n}\right\Vert }{2\sqrt{\alpha_{n}}}e_{n}\\
		\begin{array}{c}
		\sqrt{E_{\tilde{R}}\exp\left(E_{\tilde{R}}\right)}\\
		\tilde{b}_{U}\\
		\tilde{\sigma}
		\end{array}
		\end{array}\right]
		\end{align*}
	\end{strip}Therefore, the differential operator in \eqref{eq:SLAM_Lyap2_final1}
	becomes
	\begin{equation}
	\mathcal{L}\boldsymbol{{\rm V}}\leq-\underline{\lambda}\left(\mathcal{H}\right)\boldsymbol{{\rm V}}+\eta_{2}\label{eq:SLAM_Lyap2_final2}
	\end{equation}
	where $\underline{\lambda}\left(\cdot\right)$ represents the minimum
	eigenvalue of a matrix. Based on \eqref{eq:SLAM_Lyap2_final2}, one
	finds
	\begin{equation}
	\frac{d\left(\mathbb{E}\left[\boldsymbol{{\rm V}}\right]\right)}{dt}=\mathbb{E}\left[\mathcal{L}\boldsymbol{{\rm V}}\right]\leq-\underline{\lambda}\left(\mathcal{H}\right)\mathbb{E}\left[\boldsymbol{{\rm V}}\right]+\eta_{2}\label{eq:SLAM_Lyap2_final3}
	\end{equation}
	Let $c=\mathbb{E}\left[\boldsymbol{{\rm V}}\left(t\right)\right]$;
	hence $\frac{d\left(\mathbb{E}\left[\boldsymbol{{\rm V}}\right]\right)}{dt}\leq0$
	for $\underline{\lambda}\left(\mathcal{H}\right)>\frac{\eta_{2}}{c}$.
	As such, $\boldsymbol{{\rm V}}\leq c$ is an invariant set and for
	$\mathbb{E}\left[\boldsymbol{{\rm V}}\left(0\right)\right]\leq c$
	there is $\mathbb{E}\left[\boldsymbol{{\rm V}}\left(t\right)\right]\leq c\forall t>0$.
	In view of Lemma \ref{Lemm:SLAM_deng}, the inequality in \eqref{eq:SLAM_Lyap2_final3}
	holds for $\boldsymbol{{\rm V}}\left(0\right)\leq c$ and for all
	$t>0$ such that
	\begin{equation}
	0\leq\mathbb{E}\left[\boldsymbol{{\rm V}}\left(t\right)\right]\leq\boldsymbol{{\rm V}}\left(0\right){\rm exp}\left(-\underline{\lambda}\left(\mathcal{H}\right)t\right)+\frac{\eta_{2}}{\underline{\lambda}\left(\mathcal{H}\right)},\,\forall t\geq0\label{eq:SLAM_Lyap2_final4}
	\end{equation}
	Hence, $\mathbb{E}\left[V\left(t\right)\right]$ is eventually bounded
	by $\eta_{2}/\underline{\lambda}\left(\mathcal{H}\right)$ in turn
	implying $\tilde{Y}$ is SGUUB in the mean square. Therefore, the
	result in \eqref{eq:SLAM_Lyap2_final1} guarantees that $e_{i}$ as
	well as $\tilde{R}$ are regulated to the neighborhood of the set
	$\mathcal{S}$ defined in \eqref{eq:SLAM_Set2} for all $i=1,2,\ldots,n$
	and $\tilde{R}\left(0\right)\notin\mathcal{U}_{s}$. In addition,
	$\tilde{P}\rightarrow P_{c}$ as $t\rightarrow\infty$. This completes
	the proof.\end{proof}

Algorithm \ref{alg:Alg1} details the implementation stages of the
nonlinear stochastic filter for SLAM defined in in \eqref{eq:SLAM_T_est_dot_f2},
\eqref{eq:SLAM_p_est_dot_f2}, \eqref{eq:SLAM_b_est_dot_f2}, \eqref{eq:SLAM_s_est_dot_f2},
and \eqref{eq:SLAM_W_f2}. 

\begin{algorithm}
	\caption{\label{alg:Alg1}Implementation steps of the nonlinear stochastic
		filter for SLAM}
	
	\textbf{Initialization}:
	\begin{enumerate}
		\item[{\footnotesize{}1:}] Set $\hat{R}\left(0\right)\in\mathbb{SO}\left(3\right)$ and $\hat{P}\left(0\right)\in\mathbb{R}^{3}$.
		Instead, establish $\hat{R}\left(0\right)\in\mathbb{SO}\left(3\right)$
		using any method of attitude determination, see \cite{hashim2020AtiitudeSurvey}
		\item[{\footnotesize{}2:}] Set ${\rm \hat{p}}_{i}\left(0\right)\in\mathbb{R}^{3}$ for all $i=1,2,\ldots,n$
		\item[{\footnotesize{}3:}] Set $\hat{b}_{\Omega}\left(0\right),,\hat{b}_{V}\left(0\right),\hat{\sigma}\left(0\right)\in\mathbb{R}^{3}$
		\item[{\footnotesize{}4:}] Select $k_{w}$, $k_{w2}$, $k_{p}$, $\Gamma$, $\Gamma_{\sigma}$,
		$\varrho$, $k_{b}$, $k_{\sigma}$, and $\alpha_{i}$ as positive
		constants
	\end{enumerate}
	\textbf{while }(1)\textbf{ do}
	\begin{enumerate}
		\item[{\footnotesize{}5:}] \textbf{for} $j=1:n_{{\rm R}}$
		\item[{\footnotesize{}6:}] \hspace{0.5cm}Measurements and observations as in \eqref{eq:SLAM_Vect_R}
		\item[{\footnotesize{}7:}] \hspace{0.5cm} $\upsilon_{j}^{r}=\frac{r_{j}}{\left\Vert r_{j}\right\Vert },\upsilon_{j}^{a}=\frac{a_{j}}{\left\Vert a_{j}\right\Vert }$
		as in \eqref{eq:SLAM_Vector_norm}
		\item[{\footnotesize{}8:}] \hspace{0.5cm}$\hat{\upsilon}_{j}^{a}=\hat{R}^{\top}\upsilon_{j}^{r}$
		as in \eqref{eq:SLAM_vect_R_estimate}
		\item[{\footnotesize{}9:}] \textbf{end for}
		\item[{\footnotesize{}10:}] $M=\sum_{j=1}^{n_{{\rm R}}}s_{j}\upsilon_{j}^{r}\left(\upsilon_{j}^{r}\right)^{\top}$
		as in \eqref{eq:SLAM_M} with $\breve{\mathbf{M}}={\rm Tr}\left\{ M\right\} \mathbf{I}_{3}-M$
		\item[{\footnotesize{}11:}] $\boldsymbol{\Upsilon}=\hat{R}\sum_{j=1}^{n_{{\rm R}}}\left(\frac{s_{j}}{2}\hat{\upsilon}_{j}^{a}\times\upsilon_{j}^{a}\right)$
		as in \eqref{eq:SLAM_VEX_VM}
		\item[{\footnotesize{}12:}] $\pi={\rm Tr}\left\{ \left(\sum_{j=1}^{n_{{\rm R}}}s_{j}\upsilon_{j}^{a}\left(\upsilon_{j}^{r}\right)^{\top}\right)\left(\sum_{j=1}^{n_{{\rm R}}}s_{j}\hat{\upsilon}_{j}^{a}\left(\upsilon_{j}^{r}\right)^{\top}\right)^{-1}\right\} $
		as in \eqref{eq:SLAM_Gamma_VM}
		\item[{\footnotesize{}13:}] \textbf{for} $i=1:n$
		\item[{\footnotesize{}14:}] \hspace{0.5cm}$e_{i}=\hat{{\rm p}}_{i}-\hat{R}y_{i}-\hat{P}$ as
		in \eqref{eq:SLAM_e_Final}
		\item[{\footnotesize{}15:}] \textbf{end for}
		\item[{\footnotesize{}16:}] $W_{\Omega}=\left(\frac{k_{1}}{\tau_{w}}\mathbf{I}_{3}+\frac{1}{4}\frac{E_{\tilde{R}}+2}{E_{\tilde{R}}+1}{\rm diag}\left(\hat{\sigma}\right)\right)\hat{R}^{\top}\boldsymbol{\Upsilon}$,
		with $\tau_{R}=\underline{\lambda}(\breve{\mathbf{M}})\times(1+\pi)$
		\item[{\footnotesize{}17:}] $W_{V}=-\sum_{i=1}^{n}\frac{k_{3}}{\alpha_{i}}\left\Vert e_{i}\right\Vert ^{2}\hat{R}^{\top}e_{i}$
		\item[{\footnotesize{}18:}] $\dot{\hat{R}}=\hat{R}\left[\Omega_{m}-\hat{b}_{\Omega}-W_{\Omega}\right]_{\times}$
		\item[{\footnotesize{}19:}] $\dot{\hat{P}}=\hat{R}\left(V_{m}-\hat{b}_{V}-W_{V}\right)$
		\item[{\footnotesize{}20:}] \textbf{for} $i=1:n$
		\item[{\footnotesize{}21:}] \hspace{0.5cm}$\dot{{\rm \hat{p}}}_{i}=-\frac{k_{2}}{\varrho}e_{i}+\hat{R}\left[y_{i}\right]_{\times}W_{\Omega}$
		\item[{\footnotesize{}22:}] \textbf{end for}
		\item[{\footnotesize{}23:}] $\dot{\hat{b}}_{\Omega}=\frac{\Gamma_{1}}{2}\tau_{b}\hat{R}^{\top}\boldsymbol{\Upsilon}-\sum_{i=1}^{n}\frac{\Gamma_{1}}{\alpha_{i}}\left\Vert e_{i}\right\Vert ^{2}\left[y_{i}\right]_{\times}\hat{R}^{\top}e_{i}-k_{b}\Gamma_{1}\hat{b}_{\Omega}$
		\item[{\footnotesize{}24:}] $\dot{\hat{b}}_{V}=-\sum_{i=1}^{n}\frac{\Gamma_{2}}{\alpha_{i}}\left\Vert e_{i}\right\Vert ^{2}\hat{R}^{\top}e_{i}-k_{b}\Gamma_{2}\hat{b}_{V}$
		\item[{\footnotesize{}25:}] $\dot{\hat{\sigma}}=\frac{\Gamma_{\sigma}}{8}\tau_{\sigma}{\rm diag}\left(\hat{R}^{\top}\boldsymbol{\Upsilon}\right)\hat{R}^{\top}\boldsymbol{\Upsilon}-k_{\sigma}\Gamma_{\sigma}\hat{\sigma}$
	\end{enumerate}
	\textbf{end while}
\end{algorithm}

\section{Numerical Results \label{sec:SE3_Simulations}}

\subsection{Simulation\label{subsec:Simulation}}

This section demonstrates the robustness of the proposed stochastic
estimator for SLAM on $\mathbb{SLAM}_{n}\left(3\right)$ Lie group.
Consider the true angular and translational velocities of the vehicle
in 3D space to be $\Omega=[0,0,0.3]^{\top}({\rm rad/sec})$ and $V=[2.5,0,0]^{\top}({\rm m/sec})$,
respectively. Also, let its true initial pose be 
\[
R\left(0\right)=\mathbf{I}_{3},\hspace{1em}P\left(0\right)=[0,0,3]^{\top}
\]
Let four landmarks be fixed with respect to $\left\{ \mathcal{I}\right\} $
within the unknown environment and positioned at ${\rm p}_{1}=[6,0,0]^{\top}$,
${\rm p}_{2}=[-6,0,0]^{\top}$, ${\rm p}_{3}=[0,6,0]^{\top}$, and
${\rm p}_{4}=[0,-6,0]^{\top}$. Let angular and translational velocities
be corrupted by unknown constant bias $b_{U}=\left[b_{\Omega}^{\top},b_{V}^{\top}\right]^{\top}$
with $b_{\Omega}=[0.1,-0.1,-0.1]^{\top}({\rm rad/sec})$ and $b_{V}=[0.08,0.07,-0.06]^{\top}({\rm m/sec})$,
respectively. Additionally, assume that the group velocity vector
is altered by unknown noise $n_{U}=\left[n_{\Omega}^{\top},n_{V}^{\top}\right]^{\top}$
with $n_{\Omega}=\mathcal{N}\left(0,0.2\right)({\rm rad/sec})$ and
$n_{V}=\mathcal{N}\left(0,0.2\right)({\rm m/sec})$. It should be
noted that abbreviation $n_{\Omega}=\mathcal{N}\left(0,0.2\right)$
indicates that the random noise vector $n_{\Omega}$ is normally distributed
around a zero mean with a standard deviation of $0.2$. Consider two
non-collinear inertial-frame observations equal to $r_{1}=\left[-1,1,1.1\right]^{\top}$
and $r_{2}=\left[0,0,1.3\right]^{\top}$ where the associated body-frame
measurements are defined as in \eqref{eq:SLAM_Vect_R}. As was indicated
by Remarks \ref{rem:R_Marix}, the third observation and measurement
can be calculated using a cross product of the two available observations.
To account for large error in initialization, the initial estimate
of attitude and position are set as
\begin{align*}
\hat{R}\left(0\right) & =\left[\begin{array}{ccc}
0.8090 & -0.5878 & 0\\
0.5878 & 0.8090 & 0\\
0 & 0 & 1
\end{array}\right],\hspace{1em}\hat{P}\left(0\right)=[0,0,0]^{\top}
\end{align*}
The four landmark estimates are initiated at positions: $\hat{{\rm p}}_{1}\left(0\right)=\hat{{\rm p}}_{2}\left(0\right)=\hat{{\rm p}}_{3}\left(0\right)=\hat{{\rm p}}_{4}\left(0\right)=[0,0,0]^{\top}$.
The design parameters are selected as $\alpha_{i}=0.05$, $\Gamma_{1}=3\mathbf{I}_{3}$,
$\Gamma_{2}=10000\mathbf{I}_{3}$, $\Gamma_{\sigma}=10$, $k_{1}=10$,
$k_{2}=10$, $k_{3}=10$, $k_{\sigma}=0.02$, and $\varrho=0.5$ while
the initial values of bias and covariance estimates are $\hat{b}_{U}\left(0\right)=\underline{\mathbf{0}}_{6}$
and $\hat{\sigma}\left(0\right)=\underline{\mathbf{0}}_{3}$, respectively,
for all $i=1,2,3,4$. Also, select $k_{b}=10^{-13}$ as a very small
constant.

Figure \ref{fig:SLAM_3d} highlights the contrast between the true
and measured values of angular and translational velocities. The evolution
of estimate trajectories output by the proposed SLAM nonlinear stochastic
filter is depicted in Figure \ref{fig:SLAM_3d}. As demonstrated by
Figure \ref{fig:SLAM_3d}, despite large initialization error, the
robot's position converged smoothly and continuously from the zero
point of origin to the true trajectory of travel arriving at the desired
terminal point. Analogously, landmark estimates, initiated at the
origin, rapidly diverged to their true locations.

\begin{figure}[h!]
	\centering{}\includegraphics[scale=0.29]{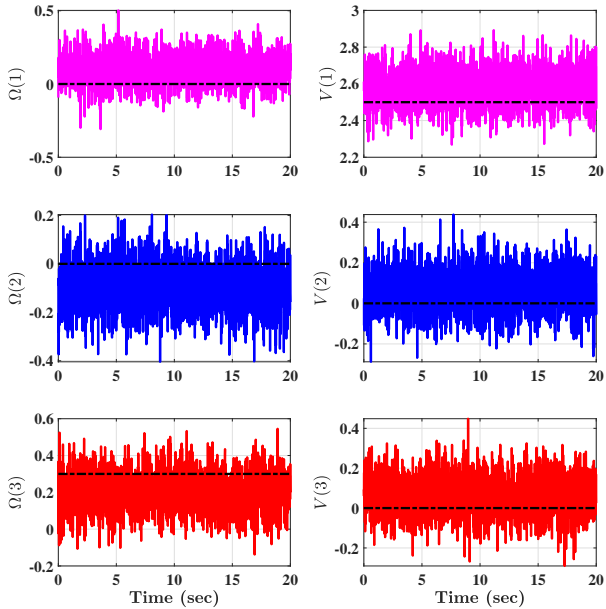}\caption{Angular and translational velocities: measured plotted in colored
		solid-line vs true plotted in black center-line.}
	\label{fig:SLAM_Vel}
\end{figure}

\begin{figure}[h]
	\centering{}\includegraphics[scale=0.29]{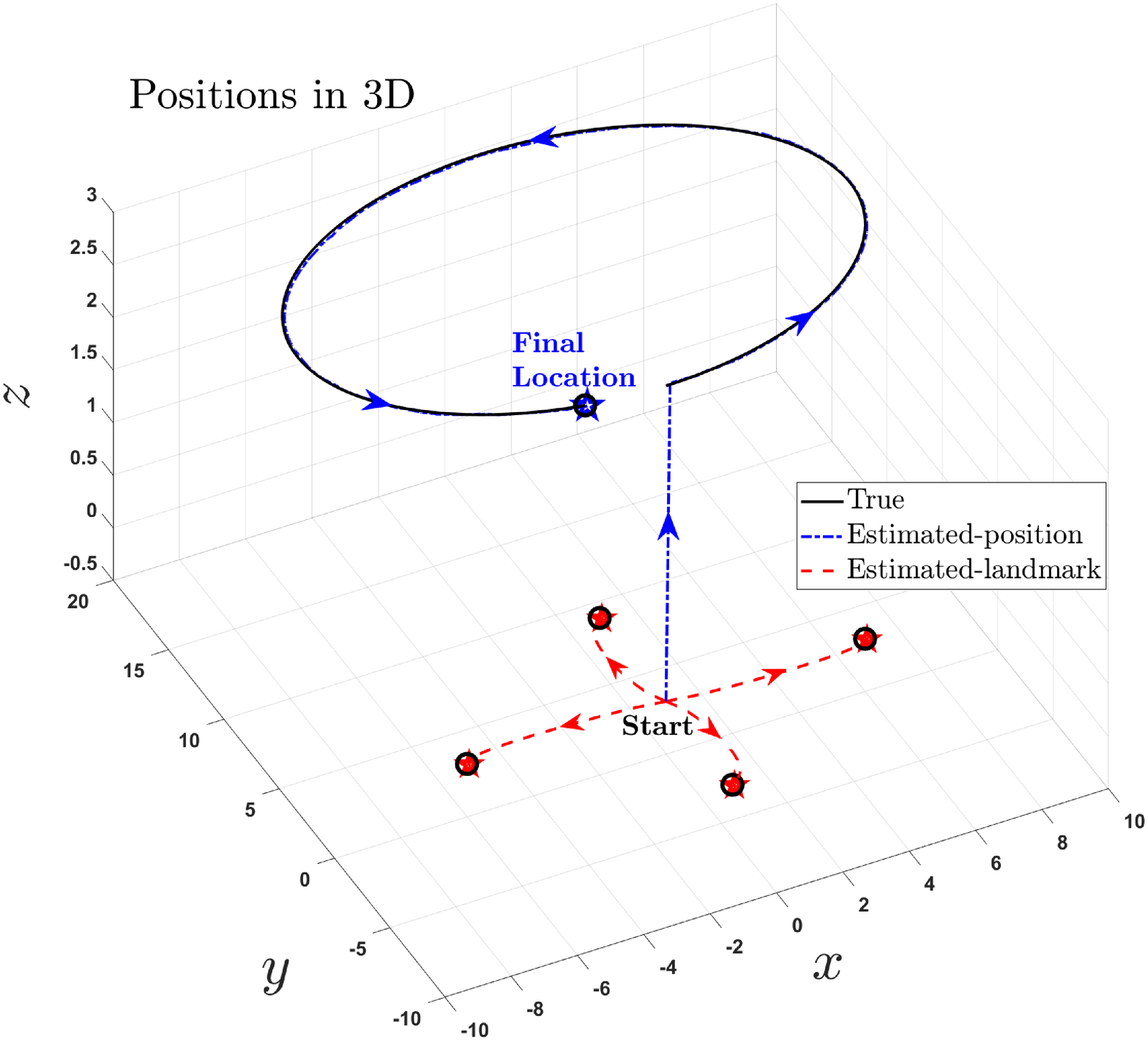}\caption{Output performance of the proposed nonlinear stochastic filter for
		SLAM described in Subsection \ref{subsec:Det_with_IMU} and detailed
		in Algorithm \ref{alg:Alg1} plotted against the true robot's position
		and landmark locations in 3D space. The true robot trajectory is plotted
		in black solid-line with the black circle marking its terminal point.
		The black circles also mark the true fixed landmarks. Estimation of
		the robot's position is plotted as a blue center-line initiating at
		the origin and converging to its final location marked with a blue
		star $\star$. Landmark estimation trajectories depicted as red dashed-lines
		initiate at $(0,0,0)$ and diverge to their final positions marked
		with red stars $\star$.}
	\label{fig:SLAM_3d}
\end{figure}

\begin{figure}[h!]
	\centering{}\includegraphics[scale=0.3]{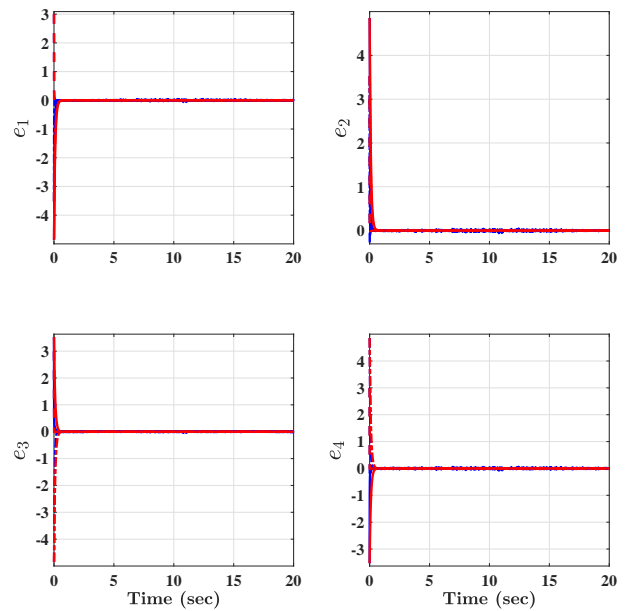}\caption{Error trajectories utilized in the Lyapunov function candidate. The
		proposed nonlinear stochastic estimator for SLAM with IMU outlined
		in Subsection \ref{subsec:Det_with_IMU} is depicted in blue against
		the deterministic nonlinear estimator for SLAM presented in Subsection
		\ref{subsec:Det_without_IMU} depicted in red.}
	\label{fig:SLAM_error_Lyap}
\end{figure}

The asymptotic convergence of the error trajectories of $e_{i}$ achieved
by the nonlinear filter for SLAM with IMU (stochastic) and without
IMU (deterministic) is demonstrated in Figure \ref{fig:SLAM_error_Lyap}
for all $i=1,2,3,4$. Consider the error defined as $||\tilde{R}||_{{\rm I}}=\frac{1}{4}{\rm Tr}\{\mathbf{I}_{3}-\tilde{R}\}$
where $\tilde{R}=\hat{R}R^{\top}$, $\tilde{P}=\hat{P}-\tilde{R}P$,
and $\tilde{{\rm p}}_{i}=\hat{{\rm p}}_{i}-\tilde{R}{\rm p}_{i}$.
It is apparent that $e_{i}=\tilde{{\rm p}}_{i}-\tilde{P}$ does not
necessarily result in $||\tilde{R}||_{{\rm I}}\rightarrow0$, $\tilde{P}\rightarrow0$,
and $\tilde{{\rm p}}_{i}\rightarrow0$. When designing a SLAM filter,
convergence of $\tilde{R}$,$\tilde{P}$, and $\tilde{{\rm p}}_{i}$
to a constant does not constitute the ultimate goal. The true objective
is to drive $||\tilde{R}||_{{\rm I}}\rightarrow0$, $||P-\hat{P}||\rightarrow0$,
and $||{\rm p}_{i}-{\rm \hat{p}}_{i}||\rightarrow0$. As such, Figure
\ref{fig:SLAM_error} benchmarks the output performance of the proposed
stochastic estimator for SLAM with IMU highlighting its superiority
over the deterministic solution without IMU. Actually, IMU facilitates
achieving $\tilde{R}\rightarrow\mathbf{I}_{3}$ which in turn leads
to $||\tilde{R}||_{{\rm I}}\rightarrow0$ as $t\rightarrow\infty$
significantly reducing error values of $||P-\hat{P}||$ and $||{\rm p}_{i}-{\rm \hat{p}}_{i}||$.
This indeed is true as $\tilde{P}=\hat{P}-\tilde{R}P$, and $\tilde{{\rm p}}_{i}=\hat{{\rm p}}_{i}-\tilde{R}{\rm p}_{i}$
causing $\tilde{R}$ to strongly influence the values of $\tilde{P}$
and $\tilde{{\rm p}}_{i}$. Figure \ref{fig:SLAM_error} reveals the
robustness of the proposed stochastic estimator for SLAM using IMU.
Figure \ref{fig:SLAM_error} illustrating its strong convergence as
well as tracking capabilities. In contrast, as can be clearly seen
in Figure \ref{fig:SLAM_error}, the deterministic nonlinear estimator
without IMU shows unreasonable performance in agreement with \cite{hashim2020LetterSLAM,zlotnik2018SLAM}.
It should be noted that presence of the residual error is unavoidable
for $||P-\hat{P}||$ and $||{\rm p}_{i}-{\rm \hat{p}}_{i}||$ for
the proposed stochastic filter illustrated by Figure \ref{fig:SLAM_error_Steady_State}.
Nonetheless, the nonlinear stochastic filter proposed in Subsection
\ref{subsec:Det_with_IMU} outperforms the nonlinear deterministic
filter presented in Subsection \ref{subsec:Det_without_IMU} in terms
of the convergence rate of $||\tilde{R}||_{{\rm I}}$ and $||P-\hat{P}||$
by a wide margin.

\begin{figure}[h!]
	\centering{}\includegraphics[scale=0.24]{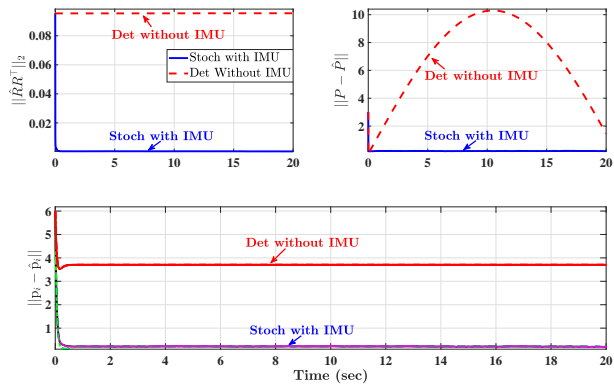}\caption{Output performance of $||\hat{R}R^{\top}||_{{\rm I}}$, $||P-\hat{P}||$
		and $||{\rm p}_{i}-{\rm \hat{p}}_{i}||$ for all $i=1,2,3,4$. All
		colors other than red represent the proposed nonlinear stochastic
		filter based on velocity, landmark, and IMU measurements, while red
		represents the nonlinear filter based only on velocity and landmark
		measurements. Det and Stoch abbreviate deterministic and stochastic
		filters, respectively.}
	\label{fig:SLAM_error}
\end{figure}

\begin{figure}[h!]
	\centering{}\includegraphics[scale=0.29]{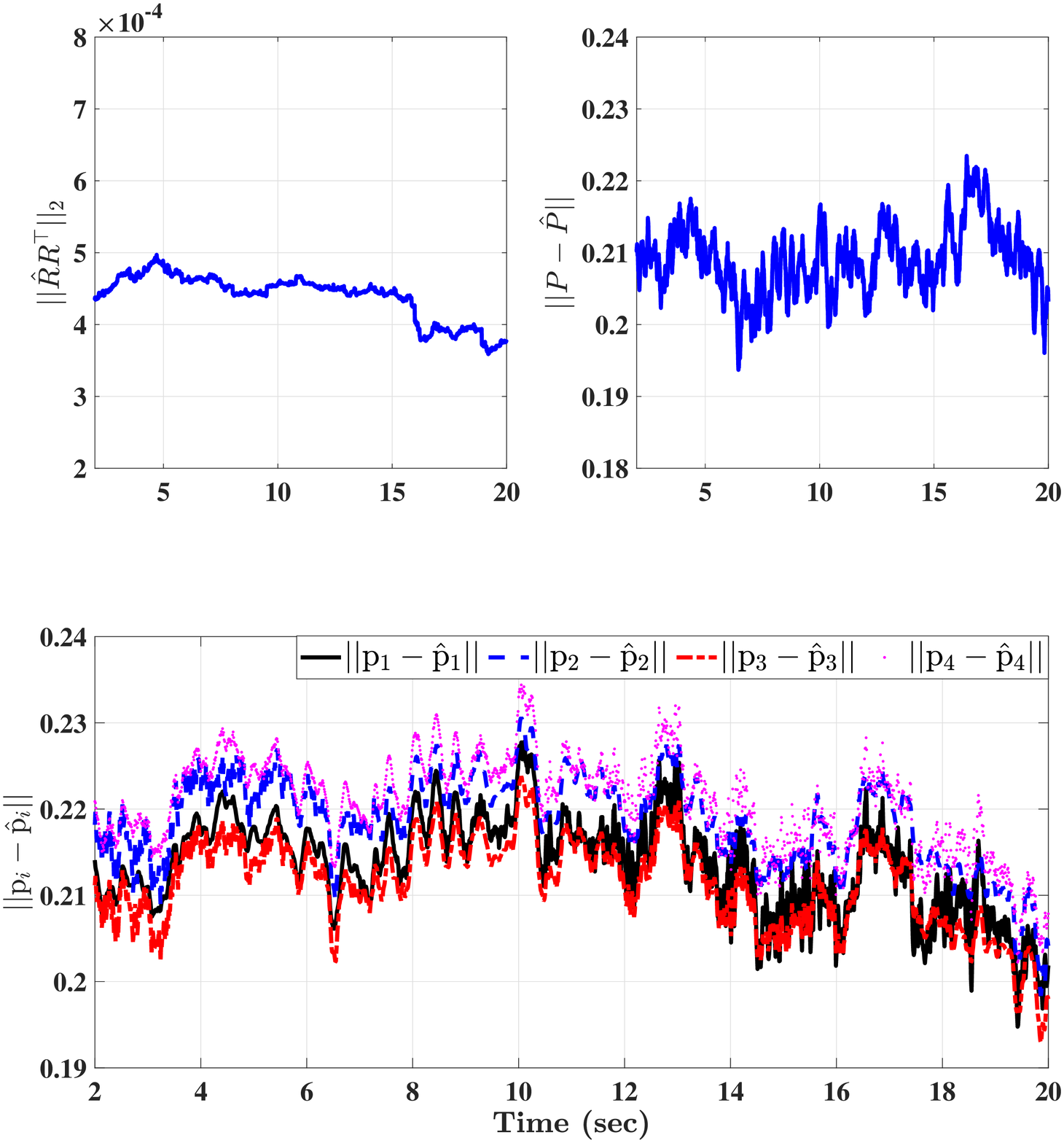}\caption{Steady-state values of $||\hat{R}R^{\top}||_{{\rm I}}$, $||P-\hat{P}||$
		and $||{\rm p}_{i}-{\rm \hat{p}}_{i}||$ for all $i=1,2,3,4$ of the
		proposed nonlinear stochastic filter for SLAM.}
	\label{fig:SLAM_error_Steady_State}
\end{figure}

\subsection{Experimental Validation}

To further validate the proposed nonlinear stochastic estimator for
SLAM, the algorithm has been tested on a real-world EuRoc dataset
\cite{burri2016euroc}. The data set includes 1) the true orientation
and position trajectory of the unmanned aerial vehicle, 2) IMU data,
and 3) stereo images. Due to the fact that the dataset does not include
landmark information, four landmarks fixed with respect to $\left\{ \mathcal{I}\right\} $
have been positioned at ${\rm p}_{1}=[3,0,0]^{\top}$, ${\rm p}_{2}=[-3,0,0]^{\top}$,
${\rm p}_{3}=[0,3,0]^{\top}$, and ${\rm p}_{4}=[0,-3,0]^{\top}$.
The four landmark estimates are initiated at the following positions:
$\hat{{\rm p}}_{1}\left(0\right)=\hat{{\rm p}}_{2}\left(0\right)=\hat{{\rm p}}_{3}\left(0\right)=\hat{{\rm p}}_{4}\left(0\right)=[0,0,0]^{\top}$.
In spite of the large initialization error, Figure \ref{fig:SLAM_3D_DS}
demonstrates smooth and continuous convergence of the robot's position
from the origin to the true trajectory successfully arriving to the
desired destination. Likewise, Figure \ref{fig:SLAM_3D_DS} shows
the convergence of the estimated landmarks from the origin to true
locations.

\begin{figure}[h!]
	\centering{}\includegraphics[scale=0.29]{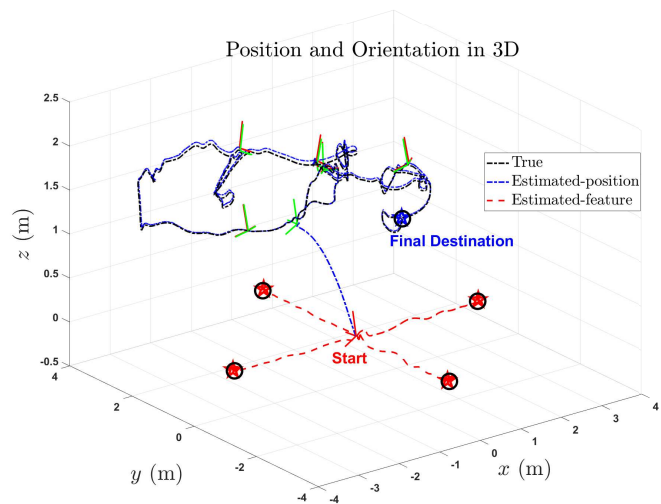}\caption{Experimental results using dataset Vicon Room 2 01.}
	\label{fig:SLAM_3D_DS}
\end{figure}

\section{Conclusion \label{sec:SE3_Conclusion}}

To truly capture the nonlinear structure of the motion dynamics of
Simultaneous Localization and Mapping (SLAM), a nonlinear stochastic
filter for SLAM on the Lie group of $\mathbb{SLAM}_{n}\left(3\right)$
is proposed. The proposed stochastic filter takes into account the
unknown constant bias and random noise corrupting the velocity measurements.
The proposed filter directly incorporates angular and translational
velocity, landmark,  and IMU measurements. The closed loop error signals
have been shown to be semi-globally uniformly ultimately bounded (SGUUB)
in mean square. Numerical results conclusively prove filter's ability
to localize the unknown robot's pose and simultaneously map the unknown
environment.

\section*{Acknowledgment}

The authors would like to thank \textbf{Maria Shaposhnikova} for proofreading
the article.

\bibliographystyle{IEEEtran}
\bibliography{bib_SLAM}

\vspace{170pt}

\section*{AUTHOR INFORMATION}
\vspace{10pt}

	{\bf Hashim A. Hashim} (Member, IEEE) is an Assistant Professor with the Department of Engineering and Applied Science, Thompson Rivers University, Kamloops, British Columbia, Canada. He received the B.Sc. degree in Mechatronics, Department of Mechanical Engineering from Helwan University, Cairo, Egypt, the M.Sc. in Systems and Control Engineering, Department of Systems Engineering from King Fahd University of Petroleum \& Minerals, Dhahran, Saudi Arabia, and the Ph.D. in Robotics and Control, Department of Electrical and Computer Engineering at Western University, Ontario, Canada.\\
	His current research interests include stochastic and deterministic attitude and pose filters, Guidance, navigation and control, simultaneous localization and mapping, control of multi-agent systems, and optimization techniques.

\end{document}